\newcommand{\A}{\mathcal{A}}
\theoremstyle{plain}
\newtheorem{proposition}[theorem]{Proposition}
\title{Almost Tight Bounds for  Conflict-Free Chromatic Guarding of Orthogonal Art Galleries}
\titlerunning{Bounds for Conflict-Free Guarding}
\author[1]{Frank Hoffmann}
\author[1]{Klaus Kriegel}
\author[1]{Max Willert}
\affil[1]{Freie Universit\"at Berlin, Institut f\"ur Informatik, 14195 Berlin, Germany\\
  \texttt{\{hoffmann,kriegel,willerma\}@mi.fu-berlin.de}}
\authorrunning{F.\ Hoffmann, K.\ Kriegel, and M.\ Willert}
\subjclass{F.2.2 Nonnumerical Algorithms, G.2.2 Graph Theory}
\keywords{Orthogonal polygons, art gallery problem, 
hypergraph coloring}
\begin{document}

\maketitle

\begin{abstract}
\ \ \ \ We address recently proposed chromatic versions of the
classic Art Gallery Problem. Assume a 
simple  polygon $P$ is guarded by a finite set of point guards and each
guard is assigned one of $t$ colors. Such a chromatic guarding is said to be
conflict-free if each point $p\in P$ sees at least one guard with a
unique color among all guards visible from $p$. The goal is to establish
bounds on the function $\chi_{cf}(n)$ of the number of colors sufficient
to guarantee the existence of a conflict-free chromatic guarding for any
$n$-vertex polygon.

 B\"artschi and Suri showed  $\chi_{cf}(n)\in O(\log n)$ (Algorithmica,
2014) for simple orthogonal polygons and the same bound applies to
general simple polygons (B\"artschi et al., SoCG 2014).\\
In this paper, we assume the r-visibility model
instead of standard line visibility. Points $p$ and $q$ in an orthogonal polygon  are r-visible to
each other if the rectangle spanned by the points is contained in $P$.
For this model we show  $\chi_{cf}(n)\in O(\log\log n)$ and
$\chi_{cf}(n)\in \Omega(\log\log n /\log\log\log n)$.\\
Most interestingly, we can show that the lower bound proof extends to guards with line visibility. 
To this end we introduce and utilize a novel discrete combinatorial structure called multicolor tableau. This is the first non-trivial lower bound for this problem setting.

Furthermore, for the strong chromatic version of the problem, where all
guards r-visible from a point must have distinct colors, we prove a
$\Theta(\log n)$-bound.
Our results can be interpreted as coloring results for special
geometric hypergraphs.

\end{abstract}

\section{Introduction}

\ \ \ \ The classic Art Gallery Problem (AGP) posed by Klee in 1973 asks for the
minimum number of guards sufficient to watch an art gallery modelled by
an $n$-sided simple polygon $P$. A  guard sees a point in $P$ if the
connecting line segment is contained in $P$. Therefore, a guard
watches a star polygon contained in $P$ and the question is to cover $P$
by a collection of stars with smallest possible cardinality. The answer
is $\lfloor \frac{n}{3}\rfloor$ as shown by Chv\'atal, \cite{Ch}. This result was
the starting point for a rich body of research about algorithms,
complexity and combinatorial aspects for many variants of the original
question. Surveys can be found in the seminal monograph by O'Rourke
\cite{ORourke}, in Shermer \cite{Sh} or Urrutia \cite{Ur}. 

Graph  coloring arguments have been frequently
used  for proving worst case combinatorial bounds for art
gallery type questions starting with Fisk's proof \cite{Fisk}. Somehow
surprisingly, chromatic versions of the AGP have been
proposed and studied only recently. There are two chromatic variants: strong
 and conflict-free chromatic guarding  of a polygon $P$. In  both
versions we look for a guard set $G$ and give each guard one of $t$
colors. The chromatic guarding is said to be strong if for each point
$p\in P$ all guards $G(p)$ that see $p$ have pairwise different colors
\cite{ELV}.  It is conflict-free if in  each $G(p)$  there is at least
one guard with a unique color, see \cite{BS}. The goal is to determine guard
sets such that the   number of colors sufficient  for these purposes is
minimal.  Observe, in both versions minimizing the number of guards is not
part of the objective function.  Figure \ref{cf_st_example} shows a simple orthogonal polygon with
both conflict-free and strong chromatic guardings in the r-visibility model.
\begin{figure}
\centering
\includegraphics[scale=0.3]{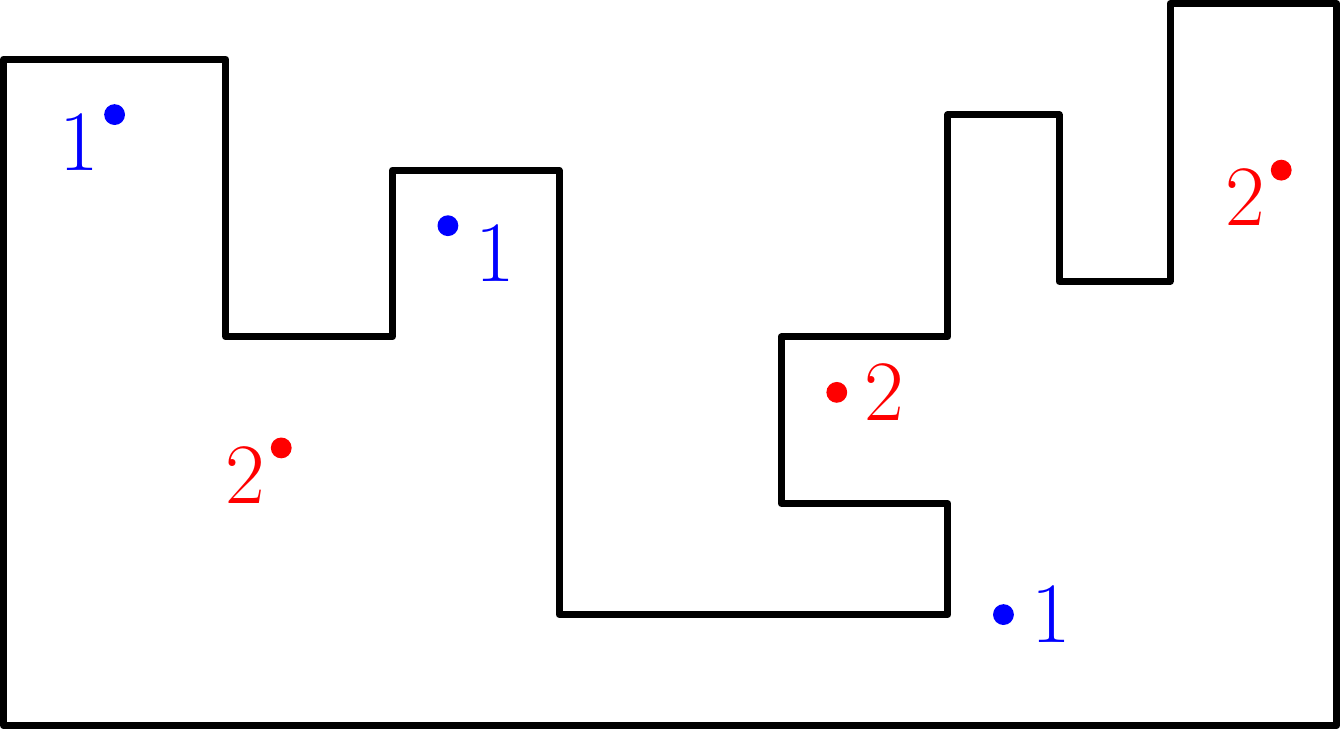}
\ \ \ \ \ \ \includegraphics[scale=.3]{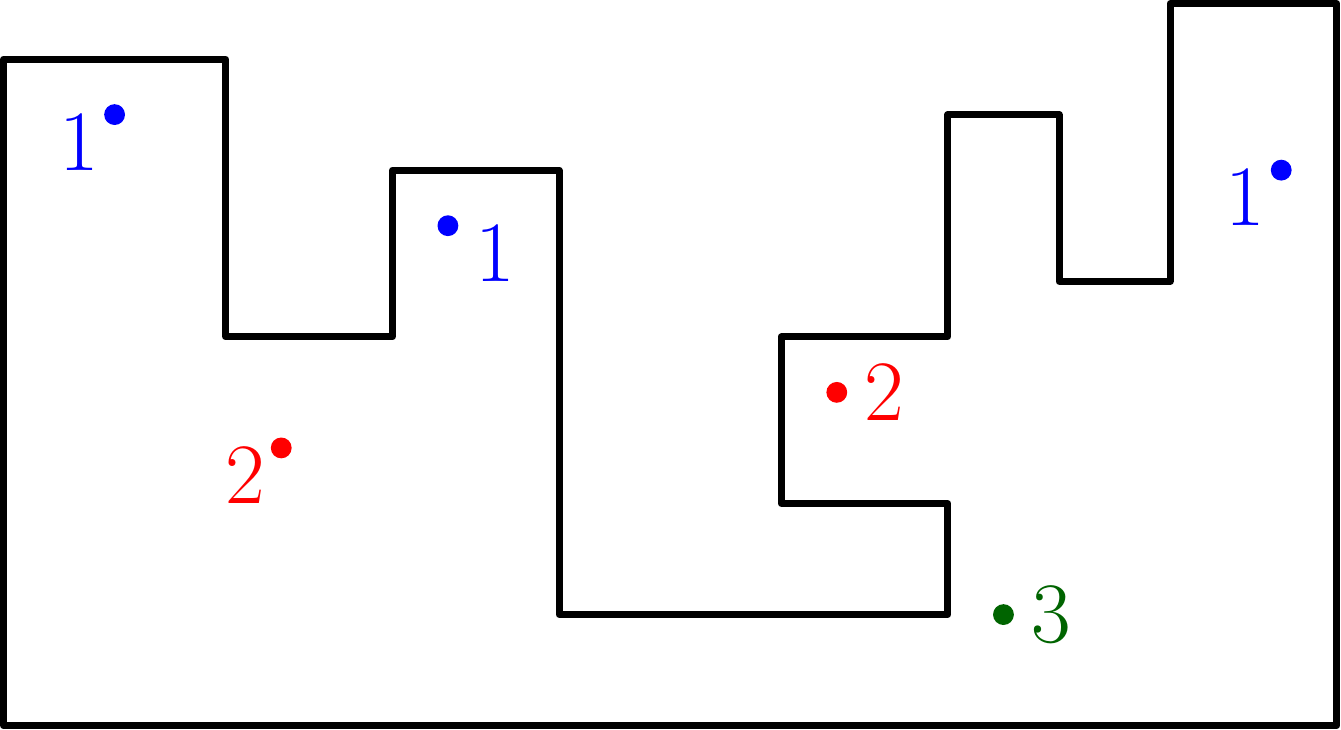}

\caption{Example of conflict-free (left) and strong chromatic (right) r-guarding}
\label{cf_st_example}
\end{figure}
To grasp the nature of the problem, observe that it has two conflicting aspects. 
We have to guard the polygon but at the same time we want the guards to hide from each other, 
since then we can give them the same color. For example, in the strong version we want a guard set 
that can be partitioned into a minimal number of subsets and in each subset the pairwise link distance is at least 3.
Moreover, we will see a strong dependence of the results on the underlying visibility model, l-visibility vs. r-visibility. We use superscripts
$l$ and $r$ in the bounds to indicate the model.

Let $\chi_{st}^l(n)$ and  $\chi_{cf}^l(n)$ denote the minimal number of colors sufficient for any simple polygon on $n$ vertices in the strong chromatic and in the conflict-free version if based on line visibility.

Here is a short summary of known bounds. For simple orthogonal polygons on $n$ vertices
$\chi_{cf}^l(n)\in O(\log n)$, as shown in  \cite{BS}. The same bound
applies to simple general polygons, see \cite{B_etal}.
Both proofs are based on subdividing the polygon into weak visibility
subpolygons that are in a certain sense independent with respect to
cf-chromatic guarding.
\\
For the strong chromatic version we have $\chi^l_{st}(n) \in \Theta(n)$
for simple polygons and $\chi^l_{st}(n) \in \Omega(\sqrt{n})$ even for
the monotone orthogonal case,
 see \cite{ELV}. NP-hardness is dicussed in \cite{FFH}. In \cite{ELV}, simple $O(1)$ upper bounds are shown
for special polygon classes like spiral polygons and orthogonal
staircase polygons combined with line visibility. 

Next we state our main  contributions for simple orthogonal polygons:\vspace*{0.2cm}
\begin{enumerate}
\item We show   $\chi^r_{cf}(n)\in O(\log\log n)$ and 
$\chi^r_{cf}(n)\in \Omega(\log\log n /\log\log\log n)$.
\item The lower bound holds for line visibility, too: $\chi^l_{cf}(n)\in
\Omega(\log\log n /\log\log\log n)$. This is the first super-constant
lower bound for this problem.
\item  For the strong chromatic version  we have $\chi^r_{st}(n)\in
\Theta(\log n)$.
\end{enumerate}
\vspace*{0.2cm}
\ \ \ \ The chromatic AGP versions  can be easily interpreted as  hypergraph coloring 
questions. Smorodinsky \cite{Smo} gives a nice survey of both
practical and theoretical aspects of hypergraph coloring.
A special role play hypergraphs that arise in geometry. For example,
given a set of points $P$ in the plane and a set of regions $\mathcal R$
like rectangles, disks etc. we can define the hypergraph $H_{\mathcal
R}(P)= (P, \{P\cap S|S\in{\mathcal R}\})$. The discrete interval
hypergraph $H_{\mathcal I}$  is a concrete example of such a hypergraph:
We take $n$ points on a line and all possible intervals as regions. It
is not difficult to see that $\chi_{cf}(H_{\it I}) \in\Theta (\log n)$.
As to our AGP versions, we can associate with a given polygon and a guard set a geometric hypergraph.
Its vertices are the guards and a hyperedge is defined by a set of guards that have a nonempty common intersection of their visibility regions and in the intersection there is
a point that sees exactly these guards.  Then one wants to color this graph in a conflict-free or in a strong manner. \\
Another  example is the following  rectangle hypergraph.
Vertex set is a finite set of $n$ axis-aligned rectangles and each
maximal subset of rectangles with a common intersection forms a
hyperedge. Here
the order for the cf-chromatic number is $\Omega(\log n)$ and
$O(\log^2n)$ as shown in \cite{Smo,PaTo}.

Looking at our results,  it is  not a big surprise that the 
combination of orthogonal polygons with r-visibility yields the
strongest bounds. This is simply due to additional structural properties and this  phenomenon has already been observed for the original AGP. For example, the $\lfloor\frac{n}{4}\rfloor$ tight
worst case bound for covering simple orthogonal polygons with general
stars can also be proven for r-stars (see \cite{ORourke}) and it
holds even for orthogonal polygons with holes, see (\cite{Hoff}).
Further, while minimizing the number of guards is NP-hard both for
simple general and orthogonal polygons if based on line visibility, it
becomes polynomially solvable for r-visibility in the simple orthogonal
case, see \cite{MRS,Wor}. The latter result is based on the solution of
the strong perfect graph conjecture.

The paper is organized as follows. We give neccesary basic definitions in the next section. Then we prove upper bounds in Section 3 using techniques developed 
in \cite{BS, B_etal}. Our main contribution are the lower bound proofs in Section 4. Especially, we introduce a novel combinatorial structure called multicolor tableau.
 This structure enables us to extend the lower bound proof for r-visibility
to the line visibility model.

Omitted proofs can be found in the Appendix.

\section{Preliminaries}
\subsection{Orthogonal polygons, r-visibility and general position assumption}
\ \ \ \ We study simple orthogonal polygons, i.e., polygons   consisting of
alternating  vertical and horizontal edges only that do not have holes.
By $|P|$ we denote the number of vertices, by  $\partial P$ the boundary
and by $\mbox{int}P= P\setminus \partial P$ the interior of the polygon.
Vertices can be reflex or convex. A reflex vertex has an interior angle 
$3\pi/2$ while convex vertices have an interior angle of $\pi/2$. To
simplify the presentation we make the following very weak assumption
about general position of orthogonal polygons: If two reflex vertices
$p,q\in P$ are connected in $\mbox{int}P$ by a horizontal/vertical chord
then the four rays emanating from $p$ and $q$ towards the interior along
the incident edges represent only 3 of the 4 main compass directions.
That is, two rays are opposite to each other and the other two point in
the same direction. 

Points $p,q\in P$ are line visible (or l-visible for short) to each
other if the line segment $pq$ is containd in $P$. Observe that the
segment $pq$ is allowed
to contain parts of boundary edges. The points $p,q$ are r-visible to
each other if the closed axis-parallel rectangle $R[p,q]$ spanned by the points is contained
in $P$. 
For $p\in P$ we denote by $V^l_P(p)=\{q\in P| pq\subset P\}$ and
$V^r_P(p)=\{q\in P| R[p,q]\subseteq P\}$ the set of all points l-visible
from $p$  and r-visibility, respectively. This is also called the
visibility polygon of a point $p\in P$. If it is clear from the context
which polygon is meant we omit the index.  A polygon that is fully
visible from one of its points is called a star and, again, we have to
distinguish between l-stars and r-stars. Most notably, for a point $p$
in an orthogonal polygon the visibility polygon $V^r(p)$
is itself orthogonal while $V^l(p)$ usually is not. We can generalize
this by defining
for a subpolygon $P'\subset P$ its visibility polygon by
$V^r(P')=\cup_{p\in P'}V^r(p)$. The windows of a subpolygon $P'$ in $P$
are those parts of  $\partial P'$
 that do not belong to $\partial P$.

For an orthogonal polygon $P$  we define its induced  r-visibility line
arragement $A^r(P)$.
 Two points $p,q\in P$ are equivalent with respect to r-visibility if
$V^r(p)=V^r(q)$.
This is an equivalence relation. What are the equivalence classes? 
First of all, there is a simple geometric construction to find $A^r(P)$.
For each reflex vertex of $P$ we extend both incident boundary edges
into  $\mbox{int}P$ until they meet the boundary again, therefore
defining a subdivision of the polygon. The faces of this line
arrangement are rectangles, line segments, and intersection points.
Clearly, two points from the interior of the same rectangle define the
same r-star. What about line segments in the arrangement? We extend a
line segment $l$ into both directions until we hit a convex vertex or
the interior of a boundary edge. Let's call this extension $l^+$. By our
general position assumption we know that on one side
 ({\it inner} side) of $l^+$ there is only polygon interior.  Consider a
point $p$ in the interior of a line segment that is incident with two
rectangular faces. It is not difficult to see, that $p$    inherits the
r-visibility from the incident rectangle on its inner side and the same
rule applies to intersection points which can have up to four incident
rectangles.

Finally, we define special classes of orthogonal polygons. A weak
r-visibility polygon (also known as histogram) has a boundary edge $e$
(called base edge) connecting two convex vertices such that
$V^r_P(e)=P$. This is therefore a monotone polygon with respect to  the
orientation of $e$. A weak r-visibility polygon that is an r-star is
called a {\it pyramid}.

\subsection{Conflict-free and strong chromatic guarding}
\ \ \ \ A set $G$ of points is an r-guard set for an orthogonal polygon $P$ if their r-visibility polygons
jointly  cover the whole polygon. That is: $V^r(G)=\cup_{g\in
G}V^r(g)=P$, analogously for l-visibility. If in addition  each guard
$g\in G$ is assigned one color $\gamma(g)$ from a fixed finite set of
colors $[t]=\{1,2,\ldots ,t\}$ we have a chromatic guarding $(G,\gamma)$.
Next we give the central
definition of this paper.

\begin{definition} A chromatic r-guard set $(G,\gamma)$ for $P$ is  strong
 if for any two  guards $g,g'\in G$ we have
$V^r(g)\cap V^r(g')\not =\emptyset$ implies $\gamma(g)\not =\gamma(g')$.\\
 A chromatic r-guard set $(G,\gamma)$  is  conflict-free   if for any
point $p\in P$ in the guard set $G(p)=V^r(p)\cap  G$ there is at least one
guard with a unique color. 
\end{definition}
\ \ \ \ We denote by $\chi^r_{cf}(P)$ the minimal $t$ such that there is
conflict-free chromatic guarding set for $P$ using $t$ colors. Maximizing this value
over  all polygons with $n$ vertices from  a specified polygon class is
denoted by $\chi^r_{cf}(n)$.\\
Consequently, we  denote by $\chi^r_{st}(P)$ the minimal $t$ such that there is
strong chromatic guarding set using $t$ colors. Maximizing this value
for all polygons with $n$ vertices from  a specified polygon class defines
the value  $\chi^r_{st}(n)$.\\
The notions for line visibility are completely analogous and use superscript $l$.

\section{Upper Bounds }
\ \ \ \ We show  upper bounds for both strong and conflict-free r-guarding of
simple orthogonal polygons of size $n$: $\chi_{st}^r(n)\in O(\log n)$ and 
$\chi_{cf}^r(n)\in O(\log\log n)$. These bounds are even  realized by r-guards
placed in the interior of visibility cells. This restriction will
simplify the arguments.  The proof (see also \cite{Wi}) follows closely ideas
developed in \cite{B_etal, BS} for conflict-free
l-guarding of simple  polygons. Therefore we only recall the
general ideas, omit some proof details and emphasize the differences
stemming from the underlying r-visibility.

\subsection{Partition into independent weak visibility polygons}
\ \ \ \ First of all, we reuse the central concept of {\it independence} introduced
in \cite{BS, B_etal} for line
visibility. Independence means that one can use the same color sets for
coloring guards in independent subpolygons.
 The following definition  suffices for our purposes.

\begin{definition}
Let $P$ be a simple orthogonal polygon and  $P_1$ and $P_2$
subpolygons of $P$.
 We call $P_1$ and $P_2$  independent if  $V^r( \mbox{int}P_1)\cap V^r(\mbox{int}
P_2)=\emptyset$.
\end{definition}

Next, we are going to subdivide hierarchically an orthogonal polygon $P$
into  weak visibility subpolygons by a standard window partitioning
process as  described in \cite{BS}. \\
Remark: In the following we use the term subdivision not in the strong set-theoretic sense. A subdivision of  $P$ into closed subpolygons $P_1,\ldots ,P_k$ means that 
$P=\cup_{i=1}^kP_i$ and for all $i\not= j$ we have $\mbox{int} P_i\cap \mbox{int} P_j=\emptyset$. \\
The subdivision is
represented by a tree
${\mathcal T}={\mathcal T}_P(e)$ with the weak
 visibility polygons as node set. Let  $e$ be a highest horizontal edge
of $\partial P$, the ``starting'' window.  $Q=V^r(e)$ is a weak
visibility polygon and is the root vertex  of $\cal{T}$. Now $Q$ splits
$P$ into parts and defines a finite set (possibly empty if $Q=P$) of
vertical windows $w_1,\ldots w_k$. Each window  corresponds to a left or
right turn of a shortest orthogonal path from $e$ to the subpolygon
lying entirely behind the window. Then we recurse, see Figure \ref{partitionFigures}.\\
By the partitioning process we can obtain a linear number of subpolygons only. There are $\frac{n-4}{2}$ reflex vertices in a simple orthogonal polygon
 with $n$ vertices. Each window uses at least one reflex vertex. Therefore,
we get at most $n/2-1$ weak visibility polygons. This bound is realized for example by spiral polygons.

\begin{figure}
\centering
\includegraphics[scale=.3]{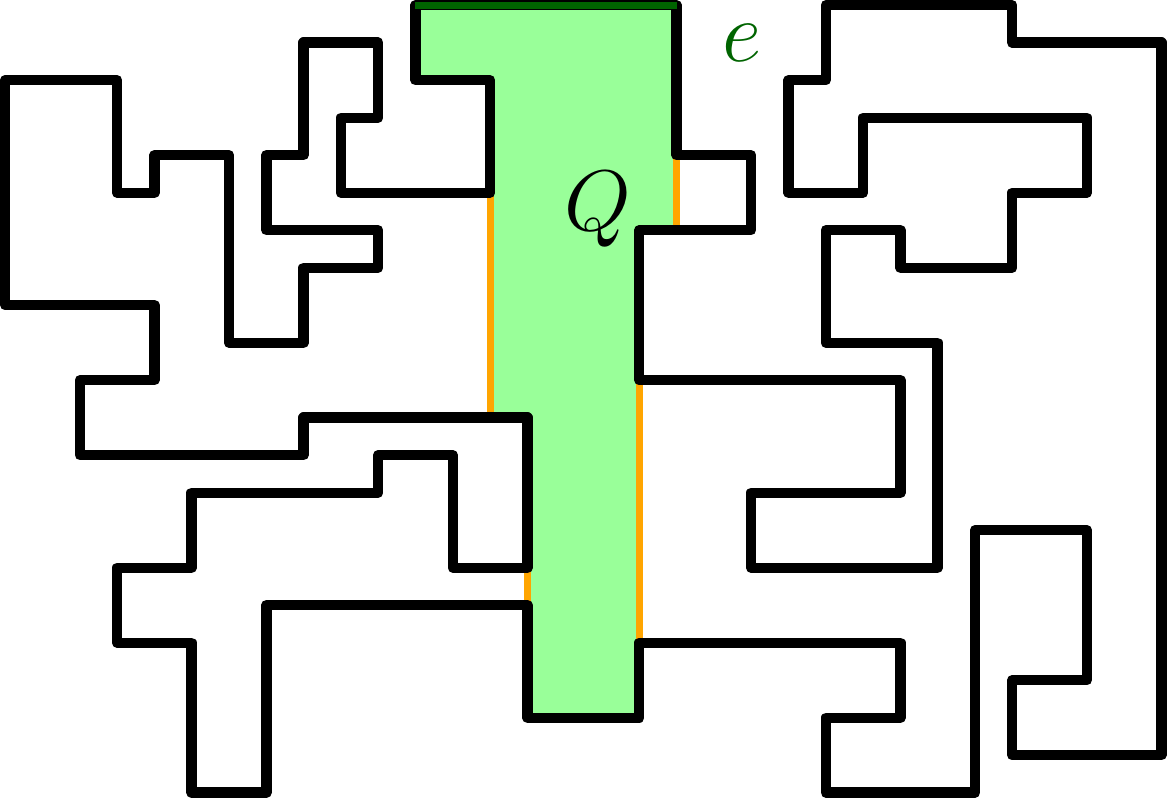}\quad\quad
\includegraphics[scale=.3]{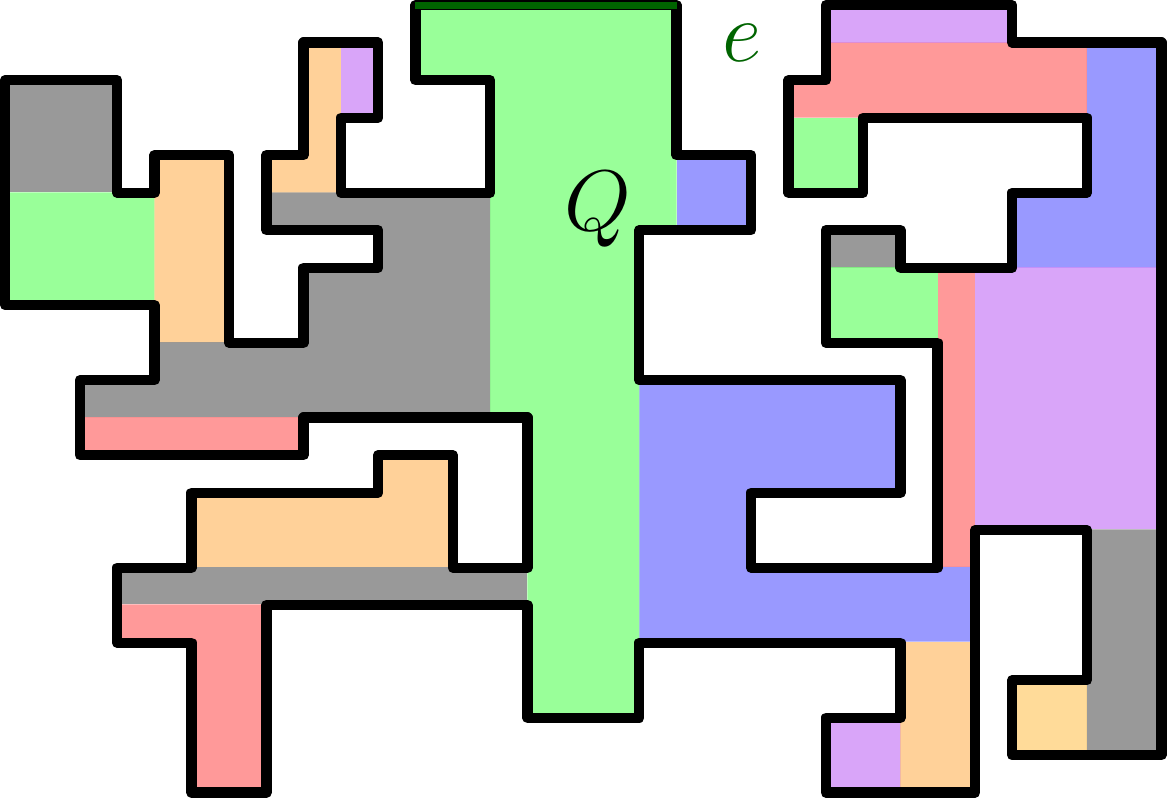}\quad\quad
\includegraphics[scale=.4]{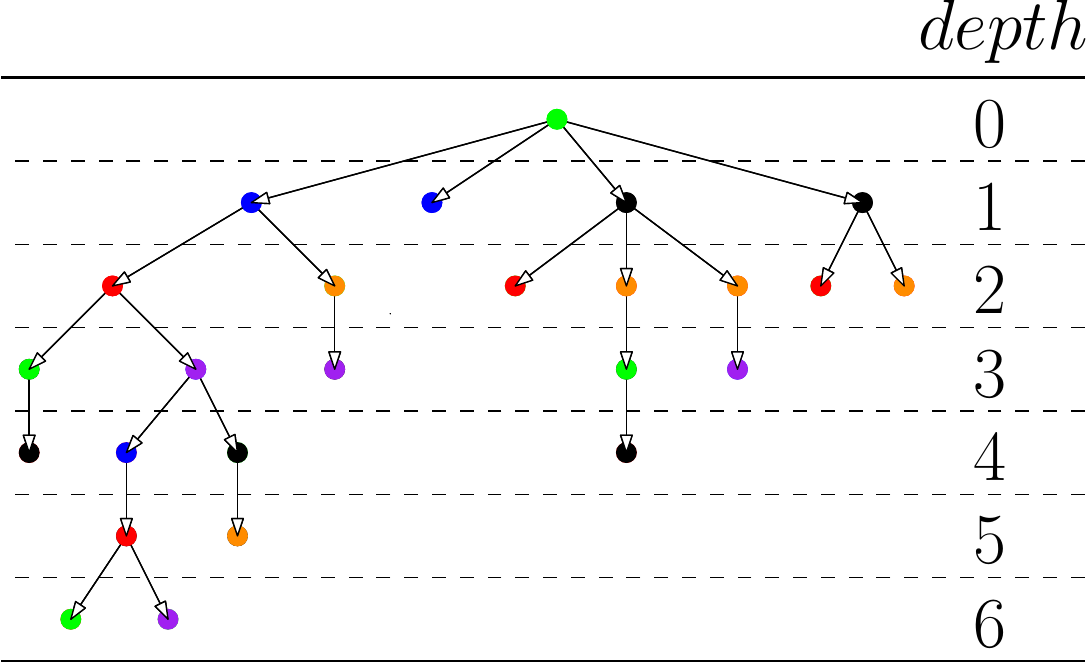}
\caption{The partitioning process and the corresponding schematic tree.}
\label{partitionFigures}
\end{figure}
Let $\A_d, d=0,1,2$ be the family of all weak visibility polygons
corresponding to nodes of depth congruent $d\mod 3$ in $\cal{T}$. We
partition  $\A_d$ into $\A_d^L$ consisting of $Q$ and all those
subpolygons which are left
 children and, on the other side,  $\A_d^R$ consisting of the remaining ``right'' parts.

\begin{lemma}
\label{indep}
Let $P$ be a polygon and  $\A^L_d , d=0,1,2$ the family of subpolygons corresponding to  left nodes in
$\cal{T}$ with depth congruent $d \mod 3$. Then the interior of  subpolygons in 
$\A^L_d$ have  pairwise link distance at least three, analogously for $\A^R_d$.
\end{lemma}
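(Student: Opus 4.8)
The plan is to prove the stronger, purely combinatorial statement that any two distinct subpolygons in $\A^L_d$ (the argument for $\A^R_d$ being symmetric) have interiors at link distance at least three, and to do so through the reformulation: for subpolygons $P_1,P_2$ one has link distance at least three between $\mbox{int}P_1$ and $\mbox{int}P_2$ if and only if $V^l(\mbox{int}P_1)\cap V^l(\mbox{int}P_2)=\emptyset$. Indeed, a path with at most two segments from $p\in\mbox{int}P_1$ to $q\in\mbox{int}P_2$ has a bend point $x$ with $x\in V^l(p)\cap V^l(q)$, and conversely any common point $x$ of the two visibility regions yields the $2$-link path $p\,x\,q$. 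Thus it suffices to show that no point of $P$ can simultaneously l-see into the interiors of two subpolygons of the same family. Note that this uses line visibility, which is why the bound will also be robust for the r-model.

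The main tool I would establish first is a \emph{one-generation reach lemma}. For a node $N$ of $\mathcal T$ write $\mbox{beyond}(N)$ for the closed region on the far side of its entrance window (the part of $P$ assigned to $N$ and its descendants). The claim is that any segment contained in $P$ starting in $\mbox{int}N$ leaves $\mbox{beyond}(N)$ only into the histogram of the parent of $N$, and penetrates a child's region only as far as that child's own histogram; informally, a straight segment cannot look around two consecutive corners. This is exactly where histogram monotonicity (each node is the weak r-visibility polygon of its base window) and the general-position assumption enter: consecutive entrance windows are mutually orthogonal and are separated by the reflex vertex carrying the inner window, which blocks any straight line from crossing both. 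Consequently $V^l(\mbox{int}N)$ is contained in the union of the histograms of $N$, its parent, and its children, i.e.\ within tree-distance one of $N$; applying $V^l$ a second time confines the $2$-link reach of $\mbox{int}N$ to tree-distance two. Hence whenever the tree-distance between $P_1$ and $P_2$ is at least three, the two visibility regions are disjoint and the link distance is at least three.

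It then remains to run the case analysis on a pair $P_1,P_2\in\A^L_d$. If the pair is comparable in $\mathcal T$, their depths differ by a nonzero multiple of $3$, so their tree-distance is at least three and the reach lemma finishes the case. If the pair is incomparable with lowest common ancestor $M$, and at least one of them lies strictly below a child of $M$, then again the tree-distance is at least three. The only remaining configuration is that $P_1$ and $P_2$ are two distinct children of the same histogram $M$; since both lie in $\A^L_d$, their entrance windows realize turns of the same direction. Here I would argue directly that no point of $M$ can l-see into both $\mbox{int}P_1$ and $\mbox{int}P_2$: using the monotonicity of $M$ together with the blocking reflex vertices of the two windows, the back-projections of the two children into $M$ are separated, whereas it is precisely two children of \emph{opposite} turn direction whose back-projections meet, and these are exactly the pairs the $L/R$ split assigns to different families. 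I expect this last step, together with a clean proof of the one-generation reach lemma, to be the main obstacle: both rest on a careful, geometry-heavy use of histogram monotonicity and the general-position assumption to exclude straight sightlines that bend around a window's reflex vertex, and some care is needed to make the turn direction of a window well-defined and consistent throughout the recursion.
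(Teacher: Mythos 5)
Your reformulation replaces the lemma's link distance by Euclidean link distance, i.e., by the condition $V^l(\mbox{int}P_1)\cap V^l(\mbox{int}P_2)=\emptyset$. But the lemma (as its proof and the corollary that consumes it make clear) is a statement about \emph{rectilinear} link distance: all that is needed, and all that is proved, is that no orthogonal path of two axis-parallel segments joins the two interiors. The stronger $V^l$-statement you set out to prove is actually false. Take $P=[0,20]\times[0,10]$ with the two slots $[10,20]\times[2,3]$ and $[10,20]\times[6,7]$ removed (this satisfies the general position assumption). The root histogram $V^r(e)$ for the top edge $e$ has the two vertical windows $\{10\}\times[3,6]$ and $\{10\}\times[0,2]$; both pockets $P_1=[10,20]\times[3,6]$ and $P_2=[10,20]\times[0,2]$ are entered by going down and turning the same way, so they are same-side siblings at depth $1$ and land in the same family. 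Yet the point $(5,4)$ sees both $(11,4)\in\mbox{int}P_1$ and $(11,1)\in\mbox{int}P_2$ along straight segments contained in $P$, so their $V^l$-neighborhoods meet and their Euclidean link distance is $2$. This is precisely the step you flagged as the main obstacle: the back-projections of two same-direction children into the parent are \emph{not} separated under line visibility. They are only separated directionally: entering one pocket forces a horizontal link pointing one way, leaving the other forces a horizontal link pointing the opposite way, and two such parallel links need a third link to be joined --- a counting argument about axis-parallel links, not about straight sightlines.

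The one-generation reach lemma fails for the same reason. Its conclusion that $V^l(\mbox{int}N)$ stays within tree-distance one of $N$ contradicts the paper's own remark that a single segment can pass from a left node into a right sibling (tree-distance two); and a non-axis-parallel segment can cross a vertical window and then a horizontal window of the next generation, so a depth difference of $3$ only forces Euclidean link distance $2$. What makes the argument work is to count axis-parallel links throughout: windows at consecutive depths alternate orientation, so one axis-parallel segment changes depth by at most one, which settles the comparable and distant-ancestor cases; and for same-side siblings the two parallel windows must be crossed by two distinct, oppositely directed parallel links plus a connector in the parent, giving three. That is the paper's proof, and it is exactly what the corollary uses, since from a point $q\in V^r(p_1)\cap V^r(p_2)$ it builds an \emph{orthogonal} two-link path through the corners of the spanned rectangles.
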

\begin{proof}
Suppose there are two different  subpolygons $P_1$ and $P_2$ in 
$\A^L_d$. If  they have different depth then for arbitrary points
$p_1\in int P_1$ and $p_2\in int P_2$ any orthogonal path connecting
these points has
length
 at least 3. Otherwise they have the same depth. In this  case they
could be sibling nodes with parent node $P_0$. To walk orthogonally 
from $p_1$ to $p_2$ it needs  two parallel edges to cross the windows
plus one more edge in
 $P_0$. If the lowest common ancestor $P_0$ is more than 1 level above
then a shortest orthogonal path from $p_1$ to $p_2$ has to visit the
parent node of $P_1$, the parent node of $P_2$ and then descend to $p_2$
which takes at least three edges.
\end{proof}

Observe that distinguishing left and right nodes is essential. It can be
possible to walk with one step from a left node to a right sibling.

\begin{corollary} Let $P_1,P_2\in \A^L_d$ be subpolygons computed in the subdivision process for $P$.
Then $P_1$ and $P_2$ are independent and there exists a strong chromatic r-guarding for $P$ in which guards in $P_1$ and $P_2$ use the same
color set. The same is true for conflict-free chromatic guarding.
\end{corollary}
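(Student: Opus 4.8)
The plan is to treat independence as the heart of the statement: once $V^r(\mbox{int}P_1)\cap V^r(\mbox{int}P_2)=\emptyset$ is known, the two coloring assertions fall out almost immediately because the guards are placed in the interiors of visibility cells. So I would first establish independence and then harvest the two coloring consequences from it.

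For independence I would argue by contraposition against Lemma~\ref{indep}. Suppose some point $x$ lies in $V^r(\mbox{int}P_1)\cap V^r(\mbox{int}P_2)$, witnessed by $p_1\in\mbox{int}P_1$ and $p_2\in\mbox{int}P_2$ with $R[x,p_1],R[x,p_2]\subseteq P$. Each r-visibility yields an orthogonal path of at most two links from $x$ to $p_i$ running inside $R[x,p_i]$, so naive concatenation gives four links and, routing both paths so that they share the axis-parallel line through $x$, three links. This alone only matches the bound of Lemma~\ref{indep}, so the real work is to save one more link. Here I would use that $P_1$ and $P_2$ are both \emph{left} children: their separating windows are crossed with the same turn orientation, which forces $p_1$ and $p_2$ to lie in a common quadrant relative to $x$ (the ``opposite-corner'' configuration, the only one producing a genuine three-link path, is excluded by this same-orientation property together with the general-position assumption). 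In the common-quadrant case an appropriate corner $(p_1^x,p_2^y)$ or $(p_2^x,p_1^y)$ lies inside $R[x,p_1]\cup R[x,p_2]\subseteq P$, giving an orthogonal path of only two links between $\mbox{int}P_1$ and $\mbox{int}P_2$ and contradicting Lemma~\ref{indep}. The analogous argument handles $\A_d^R$.

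Granting independence, I would finish as follows. For the strong version, if two guards $g\in\mbox{int}P_1$ and $g'\in\mbox{int}P_2$ satisfied $V^r(g)\cap V^r(g')\neq\emptyset$, then any witness $y$ in the intersection would obey $y\in V^r(g)\subseteq V^r(\mbox{int}P_1)$ and $y\in V^r(g')\subseteq V^r(\mbox{int}P_2)$, contradicting independence; hence no strong constraint ever links a guard of $P_1$ with a guard of $P_2$, and the two subpolygons may reuse the same palette. For the conflict-free version the same computation shows that no point $p\in P$ has both a $P_1$-guard and a $P_2$-guard in $G(p)=V^r(p)\cap G$, so the uniqueness requirement at every $p$ is evaluated inside a single subpolygon and reusing colors across $P_1$ and $P_2$ cannot create a conflict. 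This is exactly the reuse principle of \cite{BS,B_etal}.

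The step I expect to be the main obstacle is the geometric lemma in the middle paragraph: the crude link-distance bookkeeping only yields three links, exactly the threshold of Lemma~\ref{indep}, so the argument does not close without genuinely exploiting the left/right classification. Making precise why two same-side children cannot present the opposite-corner configuration to a common r-visible point, and dispatching the degenerate cases permitted by the general-position assumption, is where the care is needed; everything after independence is routine.
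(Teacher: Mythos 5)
Your overall strategy coincides with the paper's: both arguments reduce everything to producing an orthogonal path of at most two links between $\mbox{int}P_1$ and $\mbox{int}P_2$ out of a common r-visible point $q$, contradicting Lemma~\ref{indep}, and your derivation of the two coloring statements from independence is exactly the paper's (and is fine, given the standing convention that guards sit in the interiors of cells). The difference, and the gap, lies in how the two-link path is produced. You keep the path's endpoints at the guards $p_1,p_2$ and try to place the single bend at a corner of $R[p_1,q]\cup R[p_2,q]$; this forces the quadrant case analysis and, as you yourself note, leaves the ``opposite-corner'' configuration to be excluded by an unproven structural claim about same-side children. That claim is not established in your write-up, and as stated it is also stronger than what the decomposition gives you: two members of $\A_d^L$ may sit at \emph{different} tree depths (they only need to agree mod $3$), in which case their separating windows have different orientations and ``both are left children'' yields no common crossing direction; and even for siblings the most one gets directly is that $p_1,p_2$ lie in a common closed half-plane of $q$, i.e.\ in adjacent quadrants, not necessarily a common one. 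So the central geometric step of your plan is not closed.

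The paper's proof sidesteps the case analysis by moving the \emph{endpoints} rather than the bend: it chooses points $p_1'\in R[p_1,q]\cap\mbox{int}P_1$ and $p_2'\in R[p_2,q]\cap\mbox{int}P_2$ that are axis-aligned with $q$, so that $p_1'\!-\!q\!-\!p_2'$ is already a two-link orthogonal path contained in $R[p_1,q]\cup R[p_2,q]\subseteq P$, with $q$ itself as the only bend. (The paper is admittedly terse about why such aligned points can be found in the interiors, but this is its mechanism, and it needs no information about which quadrants $p_1$ and $p_2$ occupy.) The cleanest repair of your argument is to adopt this endpoint-moving device instead of attempting to rule out the opposite-corner configuration; everything else in your proposal then goes through as you describe.
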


\begin{proof} 
Assume we have  r-guards $p_i$ in $\mbox{int} P_i$ , $i=1,2$ with $V^r(p_1)\cap
V^r(p_2)$ containing a point $q$. Then there exist points $p_1'\in
R[p_1,q]\cap (\mbox{int} P_1)  , p_2'\in R[p_2,q]\cap (\mbox{int} P_2)$ and a
connecting orthogonal path
$p_1'-q-p_2'$ of length 2. But  this contradicts the previous lemma.
Therefore  $P_1$ and $P_2$ are independent and strong chromatic r-guardings for $P_1$ and $P_2$ do
not interfere with each other,
the same holds for conflict-free r-guardings.
\end{proof}
Remark: We will restrict the guards to sit in the interior of visibility cells. However, this does not effect the asymptotic upper bounds on the number of colors used.
\subsection{Guarding  a weak visibility polygon}

\ \ \ \ Consider  a weak visibility polygon $P$ with a horizontal base edge $e$.
An edge of $P$ opposite to $e$ is an $r$-edge if it connects two reflex vertices, it is a
$c$-edge if it has two convex vertices. Among the horizontal edges opposite
to $e$ there is at least one $c$-edge and  a chain connecting two consecutive 
$c$-edges contains exactly one $r$ edge. Recall that a pyramid $P$
contains exactly one $c$-edge $e_1$. We guard a pyramid with one r-guard
stationed opposite to  $e_1$
 in the visibility cell just below base edge $e$. Next we describe (see
\cite{BS}) a simple truncation process that decomposes a weak
visibility problem into pyramids.

{\bf The truncation process:}
Let $P$ be a weak visibility polygon with $n$ vertices, a  horizontal  base
edge $e$ on top of the polygon and $C$ the set of $c$-edges opposite to $e$. If there is only
one such $c$-edge we stop and return $P$. Otherwise, for each $c$-edge $e'$ we
sweep $P$ from $e'$ 
towards $e$ until the sweep line reaches the first neighboring  $r$-edge.  
We truncate $P$ by cutting of the pyramid below. After processing all
edges in $C$ we have again a weak visibility polygon $P^{(1)}$ with base
edge $e$. Observe that $P^{(1)}$ does not depend on the order in which
we process the edges in $C$
 and, moreover, the pyramids associated with $C$ are independent.\\
Then we iterate with $P^{(1)}$ and get $P^{(2)}$ and so on. Eventually, we have indeed
partitioned $P$ completely into pyramids. These pyramids have an
important structural property. By construction, the unique $c$-edge in
each pyramid contains a non-empty segment of the original boundary
$\partial P$, we call them ``solid'' segments. As guard position for
such a pyramid we choose an interior point  just below the base edge of
the pyramid opposite to an interior point of a solid segment.  
 
In Figure \ref{covGuards} we see an example of a weak visibility
polygon, its decomposition into pyramids and the chosen guard positions.
Again, there is a canonical schematic tree representing the
decomposition and the guard positons.

\begin{figure}
\centering
\includegraphics[scale=.3]{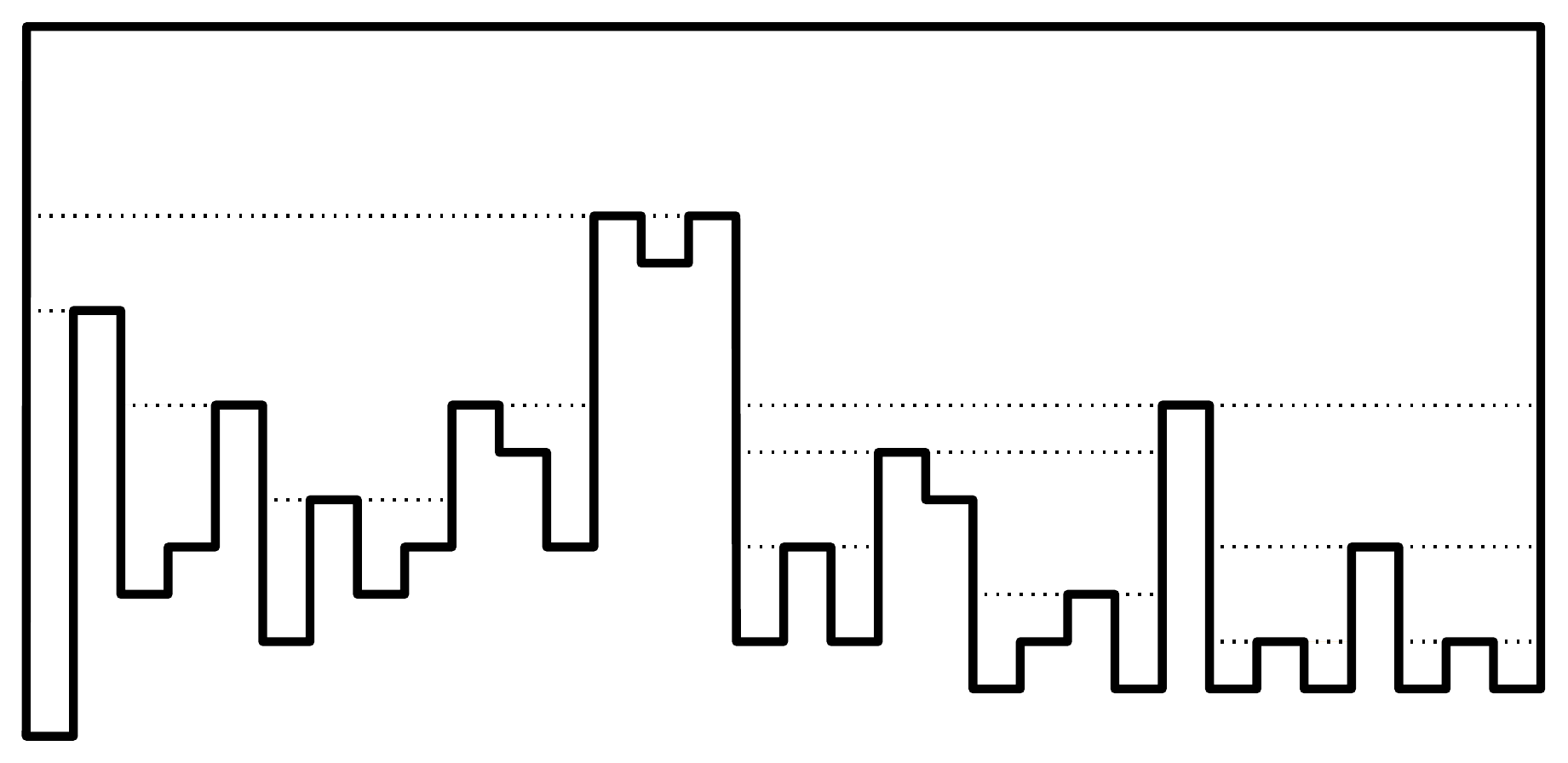}\quad\quad
\includegraphics[scale=.3]{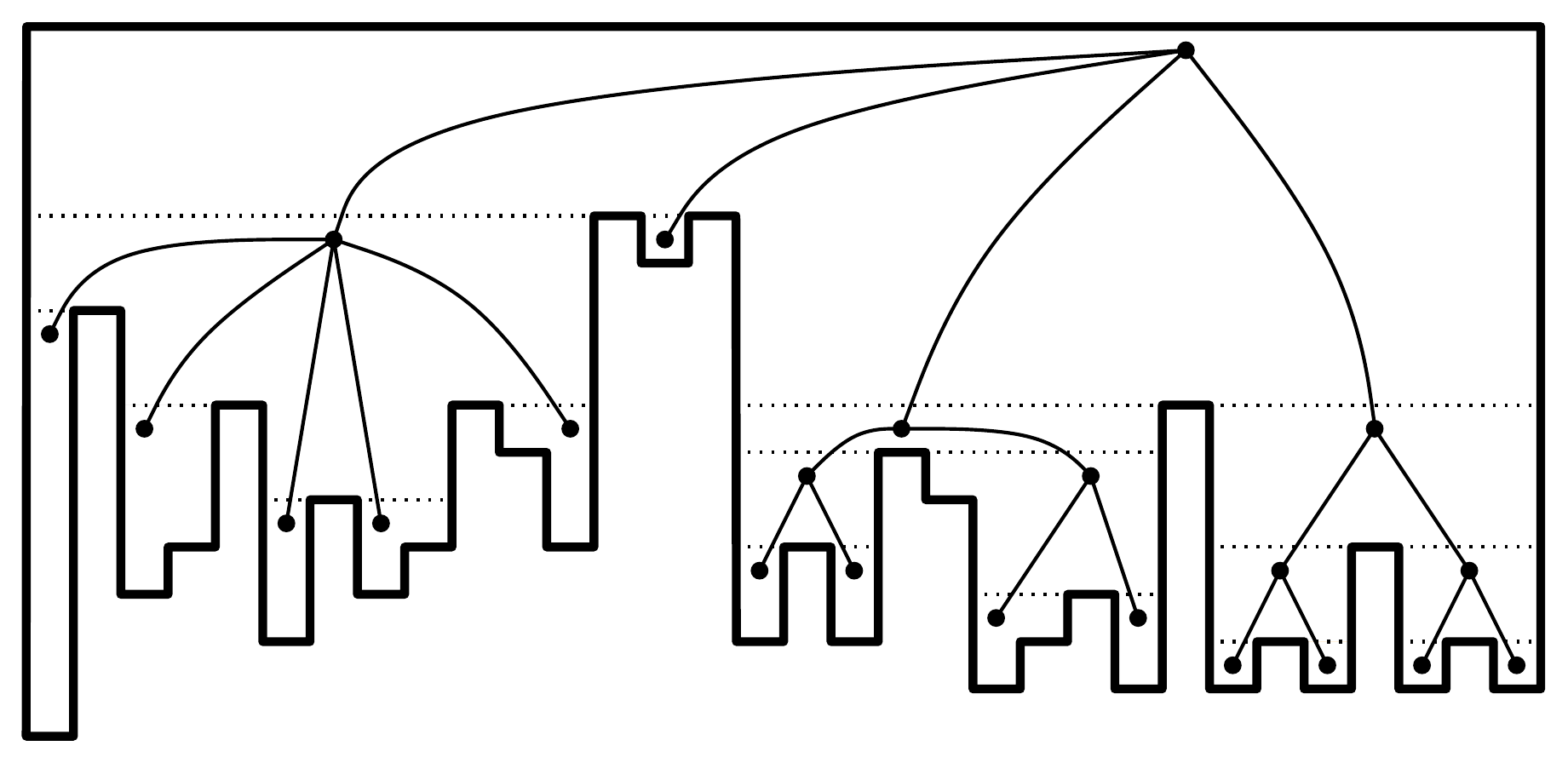}
\caption{The truncation  process of a  weak visibility polygon and its
schematic tree with guard positions}
\label{covGuards}
\end{figure}

Clearly, the height of $T_P, |P|=n$ is in $O(\log{n})$. In the worst case, this is best possible as shown by the spike
polygons $S_m$ in Section 4  we use for our lower
bound proofs.

\begin{figure}
\centering
\includegraphics[scale=.3]{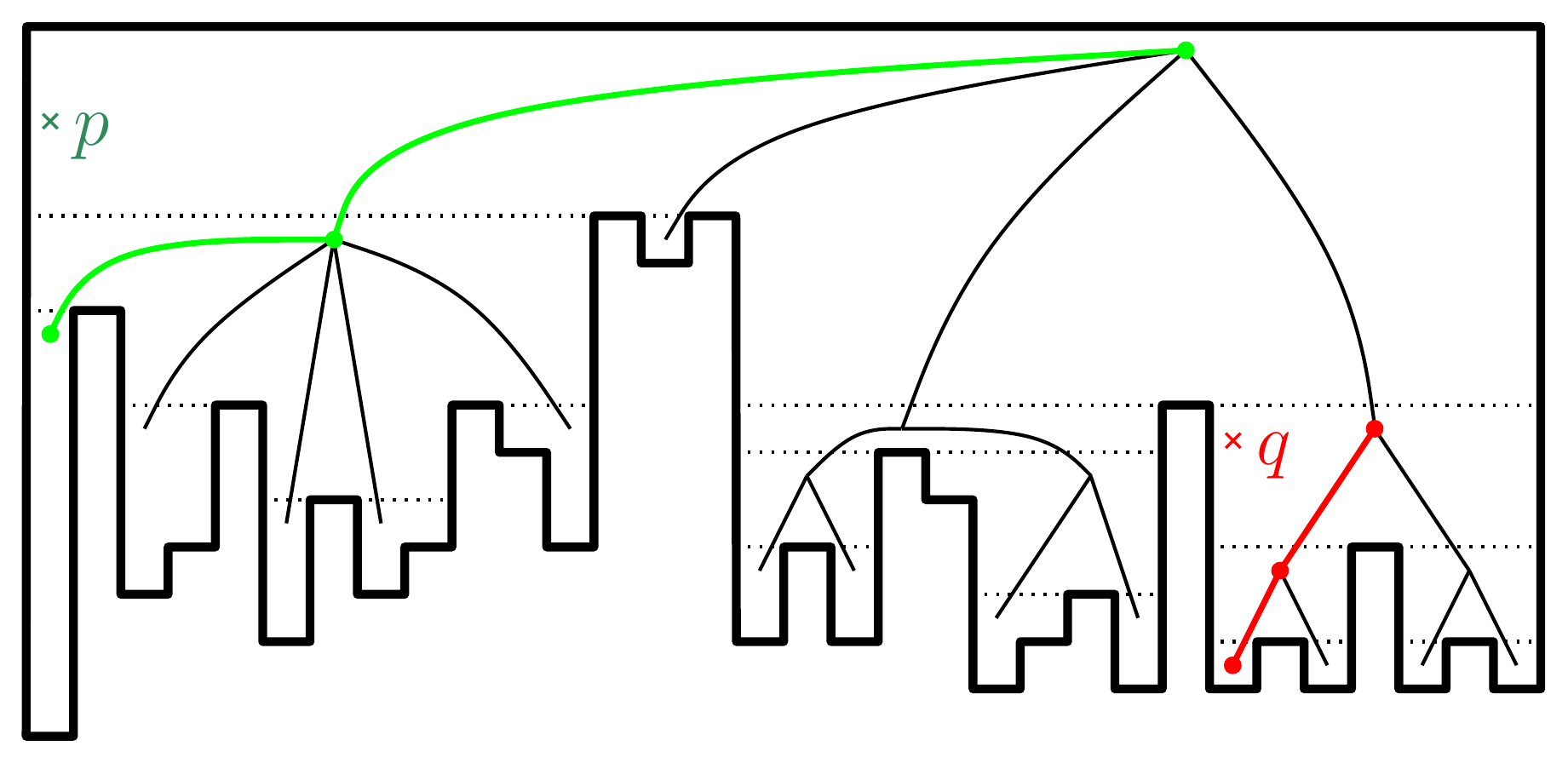}
\caption{Single points are seen by connected  chains of r-guards}
\label{guardTree}
\end{figure}

The following lemma states the main structural property for this  tree
of guards.
\begin{lemma}
\label{chain}
Let $P$ be  a weak visibility polygon and $T_P$ the
guard-tree computed in the truncation process. Then for each $p\in P$
all guards in $G(p)$  form  a connected subpath of a  root-to-leaf
path in $T_P$.
\end{lemma}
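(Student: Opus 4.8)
The plan is to understand the geometric structure that the truncation process imposes on the guard tree $T_P$, and then translate a fixed point $p \in P$ into a statement about which tree nodes contribute a guard to $G(p)$.

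Let me think about what $T_P$ looks like and how guards relate to it.The plan is to exploit the nested structure of the pyramids produced by the truncation process and to characterize, for a fixed point $p$, exactly which pyramids contribute a guard r-visible from $p$. First I would recall the geometry: each pyramid $\pi$ in the decomposition has a unique $c$-edge carrying a solid segment of $\partial P$, and its guard $g_\pi$ sits just below the base edge of $\pi$, directly above an interior point of that solid segment. A node at depth $i$ in $T_P$ corresponds to a pyramid cut off from $P^{(i-1)}$, so its base edge lies on the lower boundary of the truncated region of its parent; geometrically the children pyramids are stacked monotonically below their parent, separated by the horizontal sweep lines of the truncation. The key observation to make precise is that the base edge of a child pyramid lies inside (i.e.\ is horizontally spanned by) the base edge of its parent, so that the pyramids along any root-to-leaf path form a monotone ``telescoping'' sequence of nested horizontal strips.

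The main step would be to fix $p \in P$ and analyze $G(p) = V^r(p) \cap G$, where $G = \{g_\pi\}$ is the guard set. I would argue two complementary facts. \textbf{First}, the set of guards r-visible from $p$ all lie on a single root-to-leaf path. Here I would use that $P$ is a weak r-visibility polygon with base edge $e$ on top: every pyramid guard lies just below some horizontal sweep line, and the rectangle $R[p,g_\pi]$ being contained in $P$ forces $p$ and $g_\pi$ to be ``vertically aligned'' through a common strip. If two guards in $G(p)$ lay in pyramids that were incomparable in the tree (neither an ancestor of the other), their base edges would be horizontally disjoint strips separated by an $r$-edge (an edge between two reflex vertices, which blocks r-visibility across the truncation boundary); then the rectangles $R[p,g_{\pi_1}]$ and $R[p,g_{\pi_2}]$ could not both be contained in $P$, giving a contradiction. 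So the guards of $G(p)$ sit on a chain of nested ancestors, i.e.\ a subpath of a root-to-leaf path.

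\textbf{Second}, I would show this subpath is \emph{connected}: if $p$ sees the guard of an ancestor pyramid $\pi$ and also the guard of a descendant pyramid $\pi''$ with $\pi''$ strictly below some intermediate pyramid $\pi'$ on the path, then $p$ must also see $g_{\pi'}$. The idea is that $R[p,g_\pi]$ and $R[p,g_{\pi''}]$ together, both contained in $P$, force the vertical corridor between the strips of $\pi$ and $\pi''$ to be free of boundary; since $\pi'$'s base edge lies horizontally within this corridor and its guard $g_{\pi'}$ sits just below that base edge above a solid segment, the rectangle $R[p,g_{\pi'}]$ is squeezed inside the region already certified r-visible, so $g_{\pi'} \in V^r(p)$. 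Combining the two facts yields exactly that $G(p)$ is a connected subpath of a root-to-leaf path.

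The hard part, I expect, will be making the ``telescoping'' geometry rigorous enough that the r-visibility rectangles really do nest as claimed — in particular verifying that the base edge of each child pyramid is horizontally contained in its parent's base edge, and that the solid segment under each guard is positioned so that $R[p, g_{\pi'}] \subseteq P$ whenever $p$ sees both a strict ancestor and a strict descendant. This monotonicity is intuitively clear from the sweep-based truncation and the general position assumption, but I would need to invoke the weak r-visibility property of $P$ (that $V^r(e) = P$) together with the fact that each $c$-edge carries a solid segment to rule out a reflex vertex intruding into the corridor and breaking visibility to the intermediate guard.
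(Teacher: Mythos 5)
Your proposal follows the same two-step structure as the paper's proof: first, guards in tree-incomparable pyramids cannot both see $p$ because a piece of real boundary at the lowest common ancestor (the solid segment in its $c$-edge) makes the two subtrees independent, so $G(p)$ lies on a single root-to-leaf path; second, connectedness follows from the telescoping of the vertical strips above the pyramids' base edges, which is exactly the paper's observation that $p$ lies in a rectangle $R$ contained in the strip of every intermediate guard. The only cosmetic difference is that the paper localizes $p$ between the base lines of the highest and deepest guards seeing it rather than arguing that $R[p,g_{\pi'}]$ is squeezed into the union of the two given rectangles (which, taken literally, need not hold since $g_{\pi'}$ may lie horizontally outside that union), but the underlying nesting geometry you invoke is the same.
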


\begin{proof}
Assume two nodes representing pyramids $P_1$ and $P_2$ are not on a root
to leaf path in $T_P$. Consider the lowest common anchestor node, say it
is pyramid $P_0$. $P_0$ has a solid segment in its c-edge. Therefore
$P_1$ and $P_2$ are independent and r-guards from both pyramids cannot
see the same point $p$. Now we know that all r-guards watching a common
point $p$ are indeed on a common root-to-leaf path. Let $g_l$ be the
deepest and $g_h$ the highest guard among them with $g_h\not= g_l$. We
have to show, that all guards in between see point $p$, too. Where can
point $p$ be? It has to be in the vertical strip above the base line of
the pyramid with guard $g_l$ and below the base line of the pyramid
corresponding to $g_h$, since the parent node of $g_h$ does not see $p$
by assumption. This region is a rectangle $R$. For any guard $g$ between $g_l$
and $g_h$ the vertical strip above the corresponding base line contains
$R$ and $g$ sees $p$.  
\end{proof}
In Figure \ref{guardTree} the paths formed by r-guards watching point $p$ and for point $q$
are indicated.

\begin{theorem}
\label{stTheorem}
Let $P$ be an orthogonal  polygon with $|P|=n$. We have
$\chi^r_{st}(P)\in O(\log{n})$.

\end{theorem}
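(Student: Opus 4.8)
The plan is to combine the two hierarchical decompositions already developed: the window partition of $P$ into weak visibility polygons (with tree $\mathcal{T}$) and, inside each such polygon, the truncation process producing the guard tree $T_P$. The entire argument reduces to strongly coloring the guards of a \emph{single} weak visibility polygon with $O(\log n)$ colors; the independence structure then lets us pay for this budget only a constant number of times.

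First I would observe that inside one weak visibility polygon the strong-coloring constraint becomes a purely tree-theoretic statement about $T_P$. By the symmetry of r-visibility, two interior-cell guards $g,g'$ satisfy $V^r(g)\cap V^r(g')\neq\emptyset$ exactly when some point $q$ is r-visible from both, i.e.\ $g,g'\in G(q)$. By Lemma~\ref{chain}, $G(q)$ is a connected subpath of a root-to-leaf path of $T_P$, so any two conflicting guards stand in an ancestor-descendant relation in $T_P$. It therefore suffices to color the guards so that every ancestor-descendant pair receives distinct colors, and the obvious choice works: assign each guard the color equal to its depth in $T_P$. Two nodes in ancestor-descendant relation have different depths, hence different colors, so this is a valid strong chromatic r-guarding of the weak visibility polygon; since the height of $T_P$ is $O(\log n)$, it uses $O(\log n)$ colors.

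It remains to assemble the global coloring. Recall that the subpolygons split into the six families $\A^L_d,\A^R_d$ ($d=0,1,2$), and that by the independence corollary any two subpolygons in the same family are independent, meaning the r-visibility regions of their interiors are disjoint. Since a guard placed in $\mbox{int}\,P_i$ satisfies $V^r(g)\subseteq V^r(\mbox{int}\,P_i)$, no point is seen by guards from two distinct subpolygons of one family, so such guards never conflict. Consequently the depth-coloring above may reuse the \emph{same} palette of $O(\log n)$ colors across an entire family. Giving the six families pairwise disjoint palettes, conflicts between guards in different families are resolved automatically by the color sets, while conflicts within a family are resolved by the depth-coloring. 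The total number of colors is $6\cdot O(\log n)=O(\log n)$.

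The step that needs the most care is verifying that the per-polygon bound is governed by the \emph{height} of $T_P$ rather than its size: the depth-coloring is useful only because Lemma~\ref{chain} collapses all conflicts onto single root-to-leaf paths, and because the truncation process guarantees $\mathrm{height}(T_P)\in O(\log n)$. One should also confirm that restricting guards to the interiors of visibility cells loses nothing, so that the inclusion $V^r(g)\subseteq V^r(\mbox{int}\,P_i)$ indeed transfers independence of interiors (Lemma~\ref{indep}) verbatim to independence of the chosen guards, which is exactly what licenses reusing a palette within each family.
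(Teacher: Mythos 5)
Your proposal is correct and follows essentially the same route as the paper: decompose into weak visibility polygons via the window partition, decompose each into pyramids via the truncation process, color each guard by its depth in the guard tree of height $O(\log n)$, and invoke Lemma~\ref{chain} to see that all guards watching a common point lie on one root-to-leaf path and hence get distinct colors. Your assembly step (six families $\A^L_d,\A^R_d$ with disjoint palettes, palette reuse within a family justified by independence) is just a more explicit spelling-out of what the paper compresses into its opening sentence and the preceding corollary.
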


\begin{proof} We decompose $P$ into pairwise independent weak visibility
polygons. Each weak visibility polygon can be further decomposed into 
pyramids and the corresponding guard trees have height 
$O(\log{n})$. We color each guard by its depth in the tree.
 This is a strong chromatic guarding since for each $p\in
P$ by Lemma \ref{chain} all  of its guards have pairwise different colors.
\end{proof}

We use the same r-guard positions but a different coloring scheme to get a
conflict-free coloring. Consider the color alphabet $[m]=\{1,2,\ldots
,m\}$ and the following recursively defined set of words. Let $s_1=1$
and $s_i=s_{i-1}\circ i\circ s_{i-1}$. The following is straightforward
and has been used before for conflict-free coloring the discrete
interval hypergraph.
\begin{lemma}
\label{sequence}
A prefix of $s_m$ with length $k$ has no more than
$\lceil\log(k+1)\rceil$ different colors and each connected subword contains a unique color.
\end{lemma}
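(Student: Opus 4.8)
The plan is to exploit the self-similar recursive structure $s_i = s_{i-1}\circ i\circ s_{i-1}$ directly, proving both assertions simultaneously by induction on $m$. The base case $m=1$ with $s_1=1$ is immediate: the only nonempty prefix has length $k=1$ and uses one color, and every connected subword is a single letter. Throughout the inductive step I will use two elementary facts about $s_m$: its length is $|s_m|=2^m-1$, and since both copies of $s_{m-1}$ use only letters from $\{1,\ldots,m-1\}$, the letter $m$ occurs exactly once in $s_m$, namely at the central position $2^{m-1}$.

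For the unique-color claim, let $w$ be a connected subword of $s_m=s_{m-1}\circ m\circ s_{m-1}$. Either $w$ lies entirely inside one of the two copies of $s_{m-1}$, in which case the induction hypothesis supplies a uniquely occurring color; or $w$ contains the central letter $m$, and then by the observation above $m$ occurs exactly once in $w$ and is the required unique color. This dichotomy handles all cases.

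For the count of distinct colors in a prefix of length $k$, I distinguish ranges according to the position of $k$ relative to the center $2^{m-1}$. If $k\le 2^{m-1}-1$, the prefix of $s_m$ coincides with the length-$k$ prefix of its left copy $s_{m-1}$, and the claimed bound is inherited verbatim from the induction hypothesis. If $k\ge 2^{m-1}$, the prefix consists of the full $s_{m-1}$, then the single letter $m$, then a (possibly empty) prefix of the right copy of $s_{m-1}$; the set of colors appearing is exactly $\{1,\ldots,m-1\}\cup\{m\}$, so precisely $m$ distinct colors occur. The only genuinely fiddly point is then matching this count against the stated bound: I must verify the integer identity $1+\lfloor\log k\rfloor=\lceil\log(k+1)\rceil$ (logarithms to base two) so that the count $m$ equals $\lceil\log(k+1)\rceil$ on the range $2^{m-1}\le k\le 2^m-1$. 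This is a routine check on the two powers of two bracketing $k$, and I expect it to be the main, if mild, obstacle, since everything else follows mechanically from the recursion.

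As a sanity check and an alternative route, I will note that both statements are transparent from the closed form: the letter at position $j$ of $s_m$ equals $1+v_2(j)$, where $v_2$ denotes the $2$-adic valuation. The unique color of a subword spanning positions $[a,b]$ is the one sitting at the position of maximal valuation; this position is unique because two distinct positions sharing the maximal valuation $v$ would differ by at least $2^{v+1}$, so their closed span would contain a multiple of $2^{v+1}$ lying in $[a,b]$, contradicting maximality of $v$. The prefix-count claim then reduces to the identity $\max_{1\le j\le k}v_2(j)=\lfloor\log k\rfloor$, attained at the largest power of two not exceeding $k$.
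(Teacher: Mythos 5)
Your proof is correct and complete: the induction on the recursion $s_i=s_{i-1}\circ i\circ s_{i-1}$ handles both claims cleanly (a subword either sits inside one copy of $s_{m-1}$ or contains the unique central occurrence of $m$; a prefix either stays in the left copy or uses exactly the $m$ colors $\{1,\dots,m\}$, matching $\lceil\log(k+1)\rceil$ on $2^{m-1}\le k\le 2^m-1$), and the alternative closed-form argument via the $2$-adic valuation $1+v_2(j)$ is also sound. The paper states this lemma without proof, declaring it straightforward, so there is nothing to compare against; your argument is exactly the standard one for this sequence.
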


\begin{theorem}
\label{cfTheorem}
Let $P$ be an orthogonal polygon with $|P|=n$. Then 
 $\chi^r_{cf}(P)\in O(\log{\log{n}})$.
\end{theorem}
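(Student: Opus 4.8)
The plan is to keep the guard positions and the guard trees $T_Q$ from the strong case (Theorem~\ref{stTheorem}) but to replace the depth-coloring by the recursive sequence $s_m$ of Lemma~\ref{sequence}. The point is that $s_m$ colors a path of length $L$ conflict-freely with only $O(\log L)$ colors; since each guard tree has height $h\in O(\log n)$, this turns the $O(\log n)$ of the strong bound into $O(\log\log n)$.

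First I would decompose $P$ into weak visibility polygons and sort them into the six families $\A^L_d,\A^R_d$ ($d=0,1,2$). By Lemma~\ref{indep} and its Corollary, members of a single family are pairwise independent, so guards placed in different subpolygons of the same family never r-see a common point.

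For one weak visibility polygon $Q$, run the truncation process to obtain its guard tree $T_Q$ of height $h\in O(\log n)$ and place one guard per pyramid. Color the guard sitting at depth $j$ (root at depth $0$) by the $(j{+}1)$-st letter of $s_m$, where $m$ is chosen large enough that $|s_m|=2^m-1$ exceeds $h$. A root-to-leaf path then reads off a prefix of $s_m$ of length at most $h+1$, which by Lemma~\ref{sequence} uses at most $\lceil\log(h+2)\rceil\in O(\log\log n)$ distinct colors. Guards at equal depth share a color, but this is harmless: for any $p$, Lemma~\ref{chain} places the guards of $Q$ that r-see $p$ at a contiguous block of depths along one root-to-leaf path, so their colors form a connected subword of $s_m$; by Lemma~\ref{sequence} this subword carries a color occurring exactly once. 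Hence the coloring is conflict-free inside each $Q$ with $O(\log\log n)$ colors.

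The last step is to globalize. I would assign the six families pairwise disjoint palettes (totalling $6\lceil\log(h+2)\rceil\in O(\log\log n)$ colors), reusing a single palette across all the independent subpolygons within one family. For arbitrary $p\in P$ pick a guard that r-sees $p$; it lies in some subpolygon $Q$ of some family. By independence, that family contributes to $G(p)$ only guards of $Q$, among which some color $c$ is unique by the previous paragraph; since palettes are disjoint, $c$ occurs in no other family and is therefore unique in all of $G(p)$. So $(G,\gamma)$ is conflict-free and uses $O(\log\log n)$ colors, which proves the claim. The delicate point is exactly this globalization, where independence (only one subpolygon per family meets $V^r(p)$) and disjoint palettes (a locally unique color stays globally unique) together do the work; the essential quantitative input is the $O(\log n)$ height of $T_Q$, since without it the sequence trick would recover only the strong $O(\log n)$ bound.
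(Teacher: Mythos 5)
Your proposal is correct and follows essentially the same route as the paper: identical guard positions from the window/truncation decomposition, the depth-coloring replaced by the sequence $s_m$, with Lemma~\ref{chain} guaranteeing that the guards seeing a point read off a connected subword and Lemma~\ref{sequence} supplying the unique color. Your explicit globalization via six disjoint palettes for the families $\A^L_d,\A^R_d$ is exactly what the paper leaves implicit in the Corollary to Lemma~\ref{indep}.
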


\begin{proof}
The only difference in comparison with the proof above is the coloring
scheme. Each r-guard tree gets colored top-down with the sequence $s_m$ of
length at most height of the tree, that is $O(\log n)$. By Lemma
\ref{sequence} the color alphabet needs to be of size
$O(\log{\log{n}})$ and the coloring is conflict-free by Lemma \ref{chain}.
\end{proof}

We illustrate the construction in Figure \ref{coveringGuards}.
\begin{figure}
\centering
\includegraphics[scale=.3]{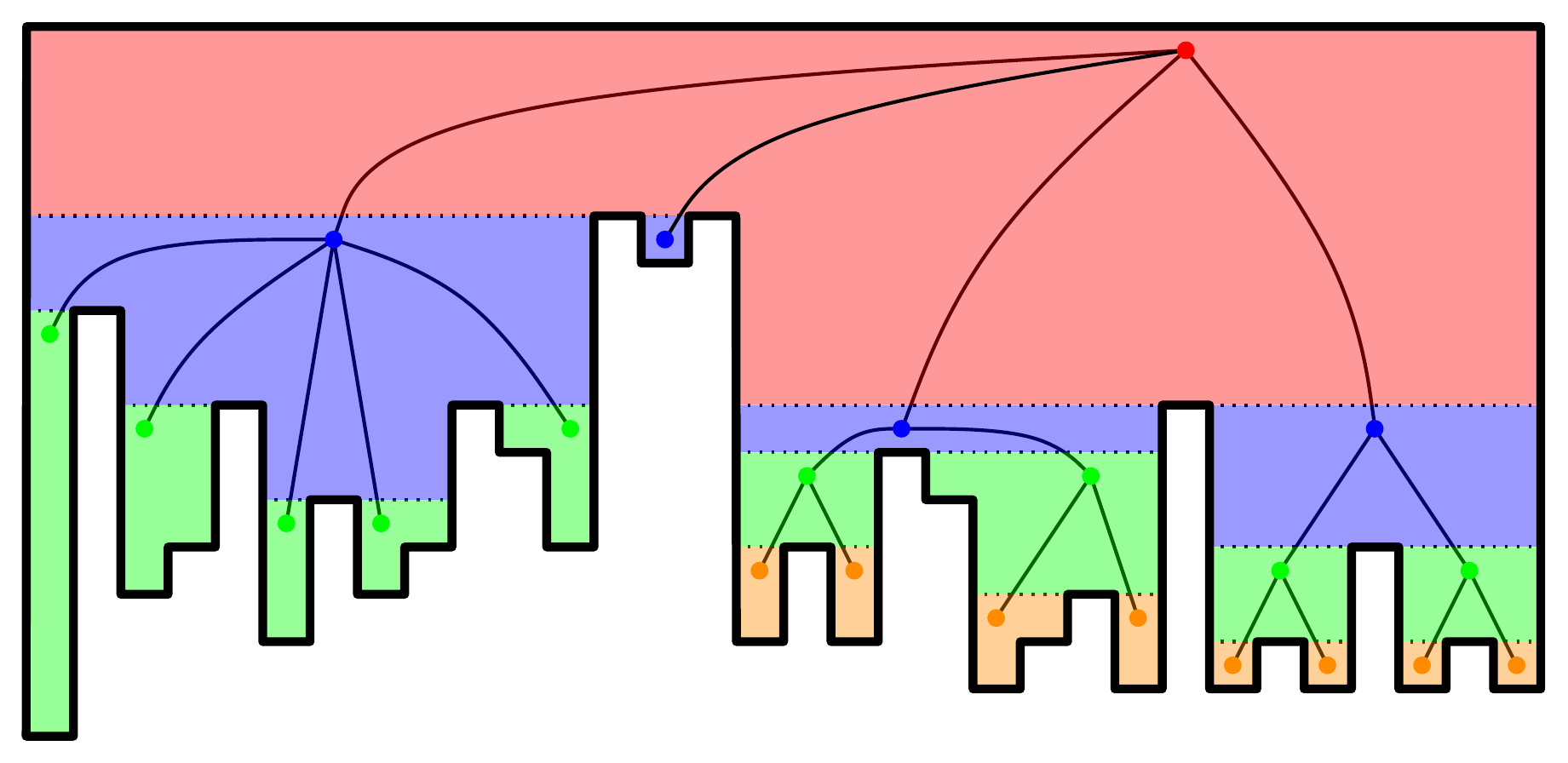}\quad\quad
\includegraphics[scale=.3]{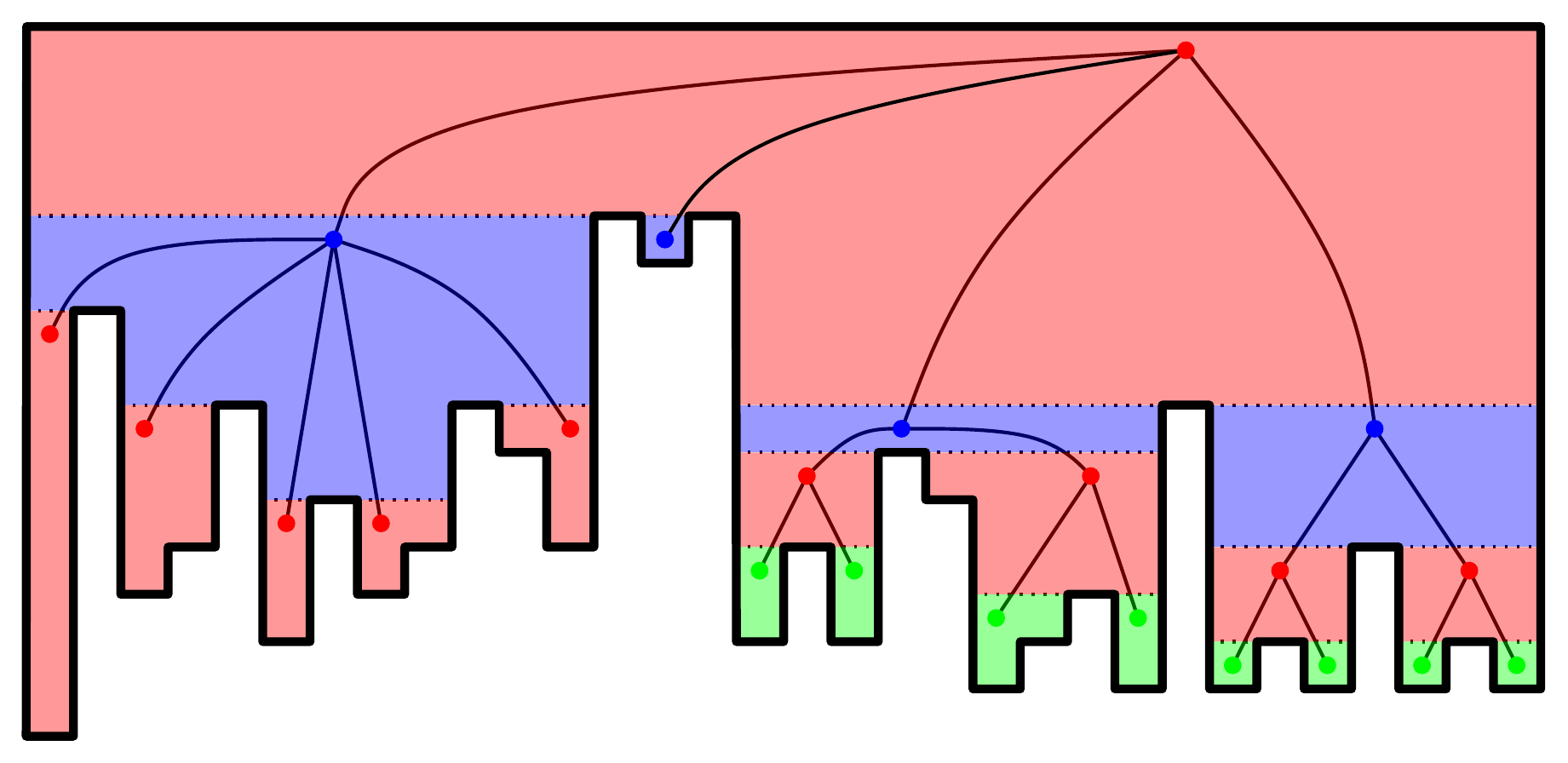}
\caption{Strong (left)  and conflict-free guarding (right) of a  weak visibility
polygon}
\label{coveringGuards}
\end{figure}

\section{Lower Bounds}

\subsection{Spike polygons}
\ \ \ \ All lower bounds established in this paper are based on a simple,
recursively defined family of so called
spike polygons $S_m$, where $S_1$ is a simple square and $S_{m+1}$ is
formed by two copies of
$S_m$ separated by a vertical spike, but joined by an additional
horizontal layer.
The left side of Figure \ref{spikePoly}  illustrates this construction
together
with the subdivision of $S_2$ into visibility cells. Observe that the
height sequence of the
spikes in $S_{m+1}$ is nothing else but the word $s_m$ used in Theorem \ref{cfTheorem} above.

Columns of $S_m$ are numbered  left to right by indices $k \in
[2^m -1]$,
and  cells in column $k$ top down by an addditional index $i \in
[d_m(k)]$ where
$d_m(k)$ is the depth of column $k$ in $S_m$. Formally, we have
$d_m(k)=m-\pi_2(k)$  where $\pi_2(k)$ is the multiplicity of  factor
$2$ in the prime decomposition of $k$.  
Obviously, a column has maximal depth $m$ iff its index is odd.

We introduce the notions of the left and right wing of  column $k$ in
order to distinguish
guard positions: The left wing $W_L(k)$ is the set of all points
strictly on the left side of the midline of column $k$  and the
right wing is  the complement $W_R(k) = S_m \setminus W_L(k)$.

\begin{figure}
\centering
\includegraphics[scale=.35]{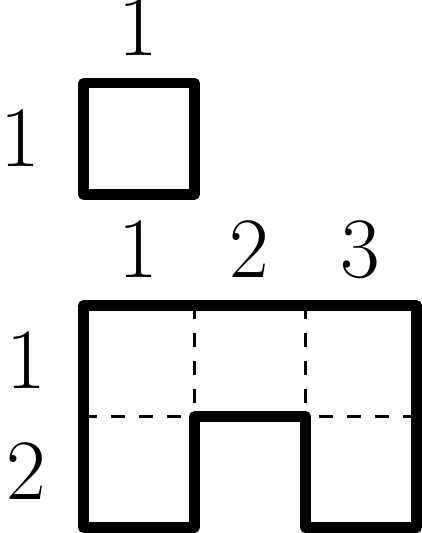} \qquad
\includegraphics[scale=.35]{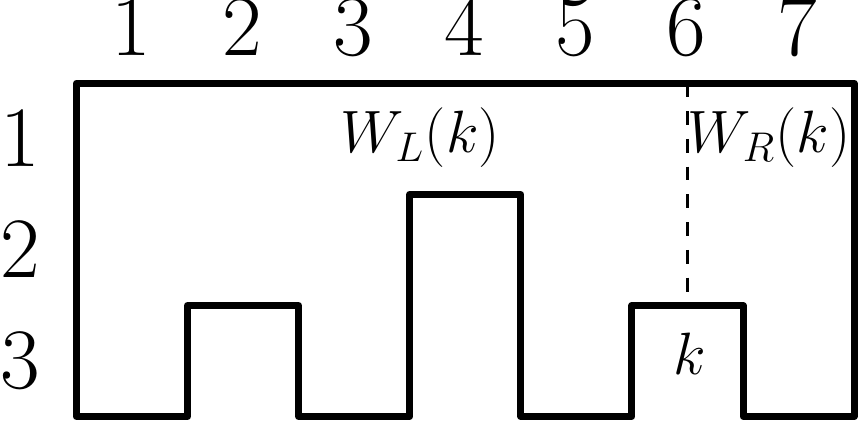}       \qquad
\includegraphics[scale=.35]{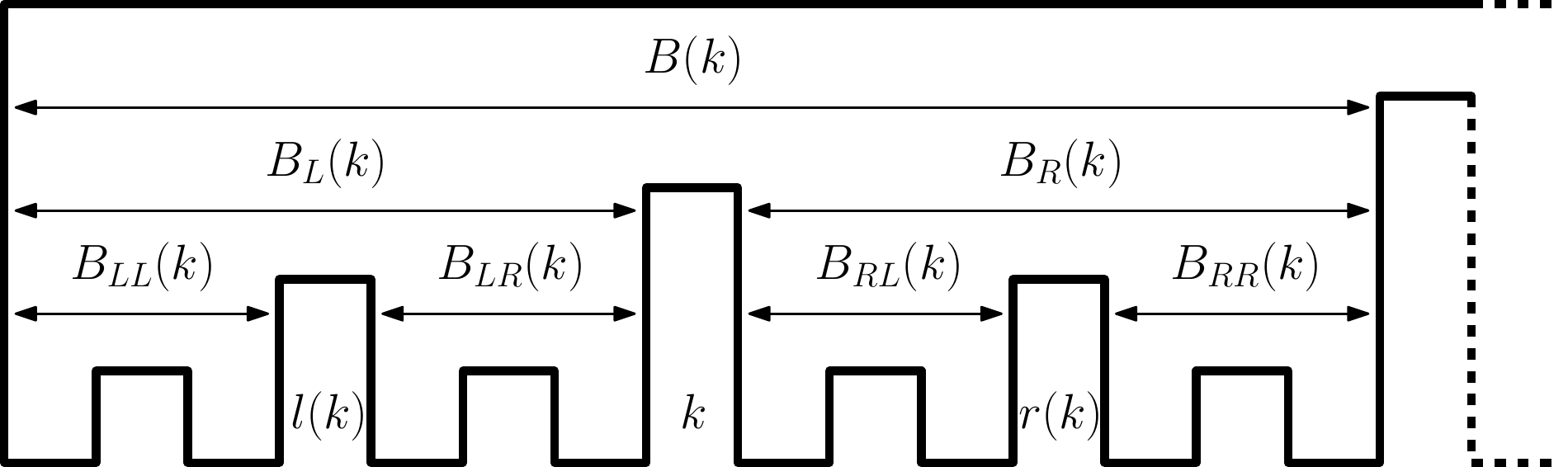}
\caption{Spike polygons $S_1$ and $S_2$ (left), left wing and right wing
of column $k=6$ in $S_3$ (middle), blocks and
  subblocks (right)}
\label{spikePoly}
\end{figure}

We will prove three lower bound results for guarding spike polygons. 
The easiest version refers to strong chromatic
r-guardings.

\noindent
\begin{theorem}
We have $\chi_{st}^r(S_t) \geq t$.
\end{theorem}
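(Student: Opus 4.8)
The plan is to reformulate strong chromatic guarding as an ordinary graph-coloring problem and then exhibit a clique of size $t$. For a fixed r-guard set $G$, build the conflict graph $H(G)$ on vertex set $G$ with $g\sim g'$ whenever $V^r(g)\cap V^r(g')\neq\emptyset$. By the definition of a strong chromatic guarding, such a guarding with color set $[t]$ is exactly a proper $t$-coloring of $H(G)$, so $\chi_{st}^r(S_t)=\min_G\chi(H(G))$, the minimum taken over all r-guard sets of $S_t$. Since $\chi(H(G))\geq\omega(H(G))$, it suffices to show that \emph{every} r-guard set $G$ of $S_t$ contains $t$ guards whose r-visibility regions pairwise intersect.

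The guards will be extracted from the unique column of maximal depth $t$. First I would fix such a column $K$; its index $k$ is odd, so $d_t(k)=t-\pi_2(k)=t$, and it has $t$ cells $c_1,\dots,c_t$ listed from top to bottom. In each cell $c_i$ I select one target point $p_i$, placed past the midline of $K$ (in $W_L(k)$ or $W_R(k)$) on the side where the flanking spike tip at level $i$ opens a side pocket; because the height sequence of the spikes is $s_{t-1}$, these pockets alternate sides and are pairwise separated. As $G$ guards all of $S_t$, for every $i$ there is a guard $g_i\in G$ with $p_i\in V^r(g_i)$, and these are the candidate clique vertices.

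Two geometric claims drive the argument. (I) \emph{Pairwise conflict}: for $i\neq j$ the regions $V^r(g_i)$ and $V^r(g_j)$ meet, which I would obtain by showing that any guard r-seeing a target $p_i$ must also r-see a common witness point $z$ placed high in the central part of $K$, so that $z\in V^r(g_i)\cap V^r(g_j)$. (II) \emph{Distinctness}: no single guard r-sees two targets, because for $i\neq j$ the rectangle $R[p_i,p_j]$ necessarily crosses the spike separating the two levels and hence is not contained in $S_t$; thus a common viewer would violate r-visibility. Claim (II) forces $g_1,\dots,g_t$ to be $t$ distinct guards, and together with (I) they form a $t$-clique in $H(G)$, giving $\chi_{st}^r(S_t)\geq\omega(H(G))\geq t$.

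The delicate point, and the main obstacle, is choosing the $p_i$ so that (I) and (II) hold simultaneously: (I) pulls the admissible viewers of each target toward the shared central column, while (II) needs the viewers of distinct targets to be mutually exclusive. I expect the resolution to rely on the wing decomposition combined with the alternating spike heights encoded by $s_{t-1}$ — placing each $p_i$ just past the midline of $K$ and just below the spike tip of level $i$ confines its viewers to a thin horizontal slab that still reaches the central witness region $z$ but is blocked by the next spike from reaching any other target's slab. Verifying these blocking relations from the explicit coordinates of $S_t$ is the only genuine computation; the rest is bookkeeping on the depth function $d_t(k)=t-\pi_2(k)$, and since the whole reduction is to finding one clique the proof stays short, as befits the easiest of the three lower bounds.
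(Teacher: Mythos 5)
Your reduction to graph coloring is fine: a strong chromatic r-guarding with $t$ colors is exactly a proper $t$-coloring of the conflict graph $H(G)$, so exhibiting a $t$-clique in $H(G)$ for \emph{every} r-guard set $G$ would indeed prove the theorem. The problem is that your construction does not produce one, and the failure is exactly at the point you flag as delicate. All $t$ targets $p_1,\dots,p_t$ lie in the single column $K$ of maximal depth, and a column of $S_t$ is a solid rectangle contained in $S_t$: the spikes separate \emph{adjacent columns} from one another, but there is no spike between two cells of the same column. Hence for $i\neq j$ the rectangle $R[p_i,p_j]$ is contained in $K$ and crosses nothing, so Claim (II) is false as stated --- a single guard placed in the bottom cell of $K$ r-sees every target, and your extracted guards $g_1,\dots,g_t$ may all coincide, yielding only a $1$-clique. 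Moving each $p_i$ to one side of the midline of $K$ does not help, since the whole column is still one rectangle. Nor does spreading the targets over a staircase of columns $k_i$ with $d_t(k_i)=i$: if the blocks $B(k_i)$ are nested (which is what you need for Claim (I), since pairwise conflict comes from a common top cell), then a single guard in the corner cell of the outermost block r-sees all the targets, and distinctness fails again. Whether $\omega(H(G))\geq t$ even holds for every guard set is plausible but would itself need an inductive argument; it is strictly stronger than $\chi(H(G))\geq t$ and nothing in your sketch establishes it.

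The paper avoids cliques entirely and argues by induction on $t$, exploiting properness of the coloring directly. In $S_{t+1}$, let $c_1$ be the color of some guard $g_1$ seeing the top cell of the middle column. Since $g_1$ lies in some column and therefore sees that column's top cell, and since it sees the entire first row, \emph{every} other guard $g'$ also sees the top cell of its own column, which $g_1$ sees too; so $V^r(g_1)\cap V^r(g')\neq\emptyset$ and $g'$ cannot have color $c_1$. Thus $g_1$ is the unique $c_1$-guard in the whole polygon. Deleting the first row leaves two copies of $S_t$, and $g_1$ (sitting in one wing of the middle column) sees no cell of the opposite copy, which is therefore strongly guarded with the remaining $t-1$ colors, contradicting the induction hypothesis. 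If you want to salvage a clique-based proof you would have to show that in every guard set either two distinct guards covering suitably chosen cells conflict or the covering guards collapse onto a single position with very large visibility, and then recurse --- at which point you have rebuilt the paper's induction in heavier notation.
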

\begin{proof} The proof is by induction. The induction base for $S_1$ is straightforward.
Next we show the induction step by contradiction. Assume that the claim is true for some  $S_t$  and suppose that there is  a
strong chromatic r-guarding of $S_{t+1}$ with $t$ colors only.
There must be a unique color $c_1$ for the top cell in the middle
column. Since the
corresponding guard $g_1$ sees all cells in the first row, it is
the only one of color $c_1$ in $S_{t+1}$
(any other $c_1$-guard  would produce a conflict in at least one cell in
the first row).
The deletion of the  top row splits the remaining part of  $S_{t+1}$
into two copies of $S_t$.
Depending on the position of $g_1$ in $S_{t+1}$, at least in one  copy  no cell is
$r$-visible from $g_1$.
Thus, we have  a strong $(t-1)$-chromatic r-guarding  of this copy.
But this contradicts
 the induction hypothesis.\end{proof}

The other two lower bound proofs  are much more involved, but they follow the same scheme. They are by induction and the induction step is shown by contradiction. 
But now the  induction step  
can require a sequence of $t$  steps cutting out from  the original $S_m$
 smaller units 
until arriving at a contradiction. We start with the proof  for
cf-guardings
with respect to $r$-visibility. In its quintessence  
 it  relies on purely combinatorial properties of a  discrete structure which we call multicolor tableau.
We will then rediscover a slightly  weaker version of this structure having similar  properties when discussing lower bounds for conflict-free guardings
based on  l-visibility
for appropriately vertically stretched spike polygons.

\subsection{Blocks and multicolor tableaux}
\ \ \ \ Consider the spike polygon $S_m$. It has $N=2^m-1$ columns.
We define the block $B(k)$ of column $k$ as the interval
of all neighbouring columns of depth at least $d(k)$, see Figure \ref{spikePoly}: \\
 $B(k)= \left[k-\left(2^{\pi_2(k)}-1 \right), k+  \left(2^{\pi_2(k)}-1 \right)\right]$. Geometrically, a block is nothing but a smaller spike polygon. Deleting its central column a block splits into a left and a right subblock: \\
$$
 B_L(k)= \left[k-\left(2^{\pi_2(k)}-1 \right), k-1\right] \ \
 B_R(k)= \left[k+1, k+  \left(2^{\pi_2(k)}-1 \right)\right]$$
For odd $k$ we have $B(k)=\{k\}$ and $B_L(k) = B_R(k)= \emptyset$.
Later  it will be necessary to subdivide a left
or right subblock again into
its left and right subblocks. These ``quarter''-subblocks can be described
making use of the definition above
together with the central column $l(k)=k-2^{\pi_2(k)-1}$ in block
$B_L(k)$ and column
$r(k)=k+2^{\pi_2(k)-1}$ in  block $B_R(k)$:\\
\[
B_{LL}(k) = B_L(l(k)) \quad   B_{LR}(k) = B_R(l(k))  \quad
B_{RL}(k) = B_L(r(k))  \quad B_{RR}(k) = B_R(r(k)).
\]

Let $G$ be a finite set of $r$-guards covering $S_m$ and
$\gamma : G \rightarrow [t]$
a cf-coloring of  $G$. By  $M_{i,k}$ we denote the  multiset of
all colors of guards that see the $i$th
visibility cell $R_{i,k}$ in column $k$, and let $m_{i,k}(c)$ denote the
multiplicity of color $c$ in  $M_{i,k}$.
Then the  combinatorial scheme
${\mathcal M}(\gamma):=\left(M_{i,k} \, | \, 1 \leq k \leq N,  1 \leq i
\leq d_m(k) \right)$
will be called a conflict-free multicolor tableau.
\\
We
formally define   the set of unique colors of a cell by $U_{i,k} := \{ c \in [t] \, | m_{i,k}(c)=1\} $ and the  standard inclusion relation $M_{i,k} \subseteq M_{j,l}$ for multisets  by:
              $\forall c \in [t] \quad  m_{i,k}(c) \leq m_{j,l}(c)$.

The following simple fact  about r-visibility in spike polygons
makes  the crucial difference between
the simpler lower bound  proof for
r-visibility and the more involved proof for l-visibility.

\begin{lemma}
\label{rVis}
Let $g$ be an r-guard in $S_m$ that sees a cell $R_{i,k}$,
then $g$ is  in a cell of depth $d \leq d_m(k)$
and it sees all cells $R_{i',j}$ with
$i' \leq i$ and $j \in B(k)$.
\end{lemma}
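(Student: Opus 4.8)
The plan is to reduce r-visibility in $S_m$ to a purely one-dimensional condition on column depths and then read both assertions off the block structure. First I would record the shape of $S_m$: each column $c$ occupies exactly the cells $R_{1,c},\ldots,R_{d_m(c),c}$ as one contiguous vertical strip hanging from the full top row, so a cell $R_{r,c}$ belongs to $P$ iff $r\le d_m(c)$. Consequently, for two cells $R_{a,c}$ and $R_{a',c'}$ the axis-parallel bounding box of the cells lies in $P$ \emph{iff} every column $c''$ in the horizontal span $[\min(c,c'),\max(c,c')]$ satisfies $d_m(c'')\ge\max(a,a')$, since the lowest row the box reaches is $\max(a,a')$. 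Because this bounding box contains every rectangle $R[p,q]$ with $p$ in the first cell and $q$ in the second, the condition is exactly r-visibility between the two cells, and it holds for the whole cell at once. This reduction is the workhorse for everything below.

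The depth bound is then immediate. Let $g$ lie in cell $R_{d,k'}$ and see $R_{i,k}$. Applying the reduction to the pair $R_{d,k'}$, $R_{i,k}$ and evaluating the span at its endpoint column $k$ gives $d_m(k)\ge\max(d,i)\ge d$, which is the first assertion; it also records $\max(d,i)\le d_m(k)$, which I use next.

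For the second assertion, fix $i'\le i$ and $j\in B(k)$ and aim to show the bounding box of $R_{d,k'}$ and $R_{i',j}$ lies in $P$, i.e.\ $d_m(c)\ge\max(d,i')=:b'$ for every column $c$ between $k'$ and $j$. Two facts drive this: (i) from the depth bound, $d_m(c)\ge\max(d,i)\ge b'$ for all $c$ between $k'$ and $k$ (the threshold relaxes since $i'\le i$); and (ii) by the very definition of $B(k)$, $d_m(c)\ge d_m(k)\ge\max(d,i)\ge b'$ for every $c\in B(k)$, with $B(k)$ a contiguous interval centred at $k$. Assuming without loss of generality $k'\le k$, a short case analysis on the location of $j$ covers the whole span: if $k'\le j\le k$ the span lies in $[k',k]$ and (i) applies; if $j>k$ the extra columns $[k,j]$ lie in $B(k)$ (as $k$ is the centre and $j\le$ its right end) and (ii) applies; and if $j<k'$ then $k'$ is squeezed between $j\in B(k)$ and the centre $k$, so the entire span $[j,k']$ sits inside $B(k)$ and (ii) applies again. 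In every case $d_m(c)\ge b'$ on the span, so $g$ r-sees all of $R_{i',j}$.

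The one point that needs care is exactly the third step: the columns of the span lying \emph{outside} $[k',k]$ --- either beyond $k$ toward $j$, or, when $g$ sits between $j$ and $k$, to the left of $k'$. The visibility hypothesis says nothing about those columns, and the argument is rescued solely because $B(k)$ is a contiguous block of columns all of depth at least $d_m(k)\ge\max(d,i)$, so any stray column still clears the relaxed threshold $b'$. Getting the interval membership right in the sub-case $j<k'\le k$ is the only slightly delicate bookkeeping; the rest is a direct consequence of the geometric reduction in the first paragraph.
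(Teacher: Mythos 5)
Your proof is correct and follows essentially the same route as the paper: the paper's argument also extends the rectangle spanned by $g$'s cell and $R_{i,k}$ over the whole block $B(k)$, using exactly your two facts that the columns between $g$ and $k$ are deep enough because that rectangle lies in $S_m$, and that all columns of $B(k)$ have depth at least $d_m(k)$ by definition. Your one-dimensional reformulation of r-visibility via column depths and the explicit case analysis on the position of $j$ relative to $k'$ and $k$ just spell out what the paper leaves implicit.
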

\begin{proof}
The first assertion is straightforward because otherwise the spike in
column $k$ would block
the visibility between $g$ and $R_{i,k}$. For the second claim
consider the minimal rectangle $R$ enclosing the cell of $g$ and 
$R_{i,k}$. Since the lower
side of $R$ has depth  $\leq d_m(k)$ and all columns $j \in B(k)$ have
depth  $\geq d_m(k)$
one can extend $R$ within $S_m$ horizontally to the whole width of the
block $B(k)$
and upwards to the top edge of $S_m$.
\end{proof}

\begin{lemma}
\label{uniqueInterval}
Let $G$ be an r-guard set covering $S_m$ and  $\gamma : G \rightarrow [t]$
a cf-coloring of $G$. Then for any color $c \in [t]$ and for any column in
the multicolor tableau ${\mathcal M}(\gamma)$ the following holds: The
multiplicity $m_{i,k}(c)$
is a monotonically decreasing function with respect to  row index $i$.
In particular, if $c$ is a unique color somewhere in column $k$ then
row indices of cells
with $m_{i,k}(c)=1$ form an interval $[i_1,i_2]$ and  $m_{i_2+1,k}(c)=0$.
\end{lemma}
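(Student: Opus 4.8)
The plan is to prove both assertions as direct consequences of Lemma \ref{rVis}, which already tells us that visibility within a column is \emph{downward-closed}: any guard seeing cell $R_{i,k}$ automatically sees every cell $R_{i',k}$ with $i' \le i$ (take $j=k \in B(k)$). First I would fix a color $c$ and a column $k$ and show monotonicity of $m_{i,k}(c)$ in $i$. The key observation is that the set of guards contributing a given color $c$ to cell $R_{i,k}$ is exactly $\{g \in G : \gamma(g)=c,\ g \text{ sees } R_{i,k}\}$, so $m_{i,k}(c)$ is the cardinality of this set. By Lemma \ref{rVis}, if a $c$-colored guard sees $R_{i,k}$ then it sees $R_{i',k}$ for every $i' \le i$. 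Hence the set of $c$-guards seeing $R_{i,k}$ is contained in the set of $c$-guards seeing $R_{i',k}$ whenever $i' \le i$. Taking cardinalities gives $m_{i,k}(c) \le m_{i',k}(c)$ for $i' \le i$, which is precisely the statement that $m_{i,k}(c)$ is monotonically decreasing in $i$. This is the whole content of the first assertion, and I expect it to be the routine part.

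For the second assertion I would argue that a monotonically decreasing sequence of nonnegative integers passes through the value $1$ on a contiguous stretch of indices. Suppose $c$ is a unique color somewhere in column $k$, i.e.\ $m_{i,k}(c)=1$ for some $i$. Let $i_1$ be the smallest and $i_2$ the largest row index with $m_{i,k}(c)=1$. I would then invoke monotonicity to rule out any gap: if $i_1 \le i \le i_2$ but $m_{i,k}(c) \ne 1$, then since $m_{i_1,k}(c)=1$ and $i \ge i_1$ forces $m_{i,k}(c) \le 1$, while $i \le i_2$ together with $m_{i_2,k}(c)=1$ forces $m_{i,k}(c) \ge 1$; hence $m_{i,k}(c)=1$, a contradiction. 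So the indices with $m_{i,k}(c)=1$ form the full interval $[i_1,i_2]$.

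Finally I would establish the boundary claim $m_{i_2+1,k}(c)=0$, assuming row $i_2+1$ exists (i.e.\ $i_2 < d_m(k)$). By monotonicity $m_{i_2+1,k}(c) \le m_{i_2,k}(c)=1$, so the value is either $0$ or $1$. If it were $1$, then $i_2+1$ would also be an index with $m_{i,k}(c)=1$, contradicting the maximality of $i_2$. Therefore $m_{i_2+1,k}(c)=0$, completing the proof. The only genuine subtlety — which is why I would state it carefully rather than grind through it — is the correct application of Lemma \ref{rVis} to obtain the \emph{containment} of $c$-guard sets across rows; once that containment is in hand, everything else is elementary manipulation of a monotone integer sequence.
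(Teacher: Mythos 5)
Your proof is correct and follows the route the paper intends: the paper relegates this proof to the appendix, but the placement of Lemma \ref{rVis} immediately before it makes clear that the argument is exactly the downward-closure of visibility within a column (take $j=k\in B(k)$ in Lemma \ref{rVis}), giving containment of the $c$-colored guard sets across rows and hence monotone multiplicities. The remaining interval and boundary claims are, as you say, elementary facts about a monotone sequence of nonnegative integers, and you handle the edge case $i_2=d_m(k)$ appropriately.
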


\begin{proposition}
\label{r-visAdmiss}
For  a conflict-free r-guarding  $\gamma : G \rightarrow [t]$ of $S_m$
the multicolor
tableau ${\mathcal M}(\gamma)$
has  three combinatorial properties:
\begin{enumerate}
\item cf-Property: $\forall_{ k \in [N]} \ \forall_{   i \in [d_m(k)]} \
U_{i,k} \not= \emptyset$.
\item Monotonicity: $\forall_{ k \in [N]} \ \forall_{  1 \leq i < i'  \leq
d_m(k)} \ M_{i',k} \subseteq M_{i,k}$.
\item Left-right rule:   If $c$ is a unique color in the top cell
$R_{1,k}$ of column $k$ then for
all $j \in B_L(k)$ or for all $j \in B_R(k)$ the following three
conditions hold
\begin{enumerate}
\item $c \in M_{1,j}$
\item If $c  \in U_{1,j}$ then  $c  \not \in M_{d_m(k)+1,j}$.
\item If $c \not \in U_{1,j}$ then $c \not \in U_{1,j'}$ for all $j' \in
B(j)$.
\end{enumerate}
\end{enumerate}
\end{proposition}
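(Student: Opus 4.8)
The plan is to verify each of the three properties of the multicolor tableau $\mathcal{M}(\gamma)$ in turn, drawing on the geometric facts about r-visibility in spike polygons established in Lemma~\ref{rVis} and the monotonicity consequence in Lemma~\ref{uniqueInterval}. The cf-Property and Monotonicity are essentially restatements of earlier results, so the real work lies in the Left-right rule, which couples the combinatorics of a unique color in a top cell with the block structure $B(k) = B_L(k) \cup \{k\} \cup B_R(k)$.

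\medskip
\noindent\textbf{cf-Property and Monotonicity.} First I would dispatch the cf-Property. Since $\gamma$ is a conflict-free r-guarding, every point of $S_m$ — in particular every point of cell $R_{i,k}$ — sees some guard of unique color; that guard contributes a color $c$ with $m_{i,k}(c) = 1$, so $U_{i,k} \neq \emptyset$. Next, Monotonicity $M_{i',k} \subseteq M_{i,k}$ for $i < i'$ is exactly the inclusion-of-multisets phrasing of Lemma~\ref{uniqueInterval}: any guard $g$ seeing the deeper cell $R_{i',k}$ sees the higher cell $R_{i,k}$ as well, by the ``$i' \le i$'' clause of Lemma~\ref{rVis} applied within the single column $k$. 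Hence each color multiplicity can only weakly decrease as the row index grows, which is precisely $\forall c\colon m_{i',k}(c) \le m_{i,k}(c)$.

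\medskip
\noindent\textbf{Left-right rule.} This is where I expect the main obstacle. Suppose $c \in U_{1,k}$, witnessed by a unique guard $g$ of color $c$ seeing the top cell $R_{1,k}$. By Lemma~\ref{rVis}, $g$ lives in a cell of depth $\le d_m(k) = d$, and it sees every cell $R_{i',j}$ with $i' \le 1$ (so the top cell) for all $j \in B(k)$. Thus $c \in M_{1,j}$ for all $j \in B(k)$, which in particular gives condition (a) for whichever side we choose. The delicate part is that the guard $g$ sits in a definite column, hence in one wing relative to $k$; I would use this to select the subblock — say $B_R(k)$ if $g$ is weakly to the left of $k$, so that the rectangle spanned by $g$ and cells in $B_R(k)$ stays inside $S_m$ and $g$ sees those cells down to the depth its own cell permits, and symmetrically $B_L(k)$ otherwise. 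For condition (b): if additionally $c \in U_{1,j}$ for $j$ in the chosen side, then $g$ is (by uniqueness of $c$ in $R_{1,k}$ and the fact that $g$ sees $R_{1,j}$) the sole $c$-colored guard visible from $R_{1,j}$; were $c$ to appear in $M_{d+1,j}$, a different $c$-guard $g'$ deeper than depth $d$ would have to be responsible, but then by Lemma~\ref{rVis} that $g'$ would also see $R_{1,k}$, giving two distinct $c$-guards in $R_{1,k}$ and contradicting $c \in U_{1,k}$. Condition (c) is the block-propagation statement: if $c \in M_{1,j}$ but $c \notin U_{1,j}$, then there are at least two $c$-guards seeing $R_{1,j}$; I would argue that both of them, by the horizontal-extension clause of Lemma~\ref{rVis}, see the top cells of the entire sub-block $B(j)$, so $c$ can never be unique anywhere in $B(j)$, i.e.\ $c \notin U_{1,j'}$ for all $j' \in B(j)$.

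\medskip
The crux of the difficulty is the case analysis forcing a consistent choice of side in the Left-right rule: one must show that a \emph{single} guard witnessing uniqueness in $R_{1,k}$ commits to one wing and therefore that \emph{one} of the two subblocks uniformly satisfies (a)--(c). The geometric lever throughout is Lemma~\ref{rVis}: its ``depth $\le d_m(k)$'' conclusion pins down where offending guards can hide, and its ``sees all of $B(k)$ upward'' conclusion lets a single guard's color propagate across a whole block. I would organize the argument so that every multiplicity claim is reduced to the existence or non-existence of a suitably placed guard, and then converted into a contradiction with either conflict-freeness or the uniqueness hypothesis $c \in U_{1,k}$, avoiding any direct manipulation of the multisets beyond the inclusions already granted by Monotonicity.
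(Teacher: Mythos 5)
Your overall architecture matches the paper's proof: the cf-property is immediate, monotonicity comes from Lemma~\ref{uniqueInterval}, and for the left-right rule you locate the unique guard $g$ in one wing and verify (a)--(c) in the opposite block via Lemma~\ref{rVis}. Conditions (a) and (c) are handled exactly as in the paper. However, your argument for condition (b) contains a genuine error in its final step. Suppose $g\in W_R(k)$, $j\in B_L(k)$, $c\in U_{1,j}$, and (for contradiction) $c\in M_{d_m(k)+1,j}$, witnessed by a $c$-colored guard $g'$ seeing $R_{d_m(k)+1,j}$. You conclude that ``by Lemma~\ref{rVis} that $g'$ would also see $R_{1,k}$,'' contradicting $c\in U_{1,k}$. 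This does not follow: Lemma~\ref{rVis} applied to $g'$ and the cell $R_{d_m(k)+1,j}$ only propagates $g'$'s visibility across the block $B(j)$, and $B(j)\subseteq B_L(k)$ does \emph{not} contain column $k$. Indeed the implication is false in general: if $g'$ sits in column $j$ at depth $d_m(k)+1$, the rectangle spanned by $g'$ and any point of $R_{1,k}$ contains points of column $k$ at depth $d_m(k)+1$, which lie outside $S_m$, so $g'$ does not r-see $R_{1,k}$. The correct contradiction (the one the paper uses) is with $c\in U_{1,j}$, the very hypothesis of condition (b): Lemma~\ref{rVis} does give that $g'$ sees $R_{1,j}$ (take $i'=1$ and the column $j\in B(j)$ itself), and since $g$ also sees $R_{1,j}$, there are two distinct $c$-colored guards visible from $R_{1,j}$.

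A second, smaller point: you assert that the guard responsible for $c\in M_{d_m(k)+1,j}$ must be ``a different $c$-guard $g'$'' but never justify why it cannot be $g$ itself. The justification is the whole reason the rule picks the \emph{opposite} block: a guard in $W_R(k)$ cannot see any cell of depth $d_m(k)+1$ in $B_L(k)$, because the spanning rectangle would again have to pass through column $k$ below its floor. Your stated reason for the choice of side (that $g$ sees the cells of the chosen block) actually holds for both sides by Lemma~\ref{rVis} and so motivates nothing; the side choice is forced by condition (b), not by condition (a). With these two repairs the proof coincides with the paper's.
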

\begin{proof}
There is nothing to prove for the cf-property and the monotonicity
follows from
Lemma \ref{uniqueInterval}.
It remains to establish the left-right rule.
Assume  $c \in U_{1,k}$ and consider  the corresponding
guard $g$.
Depending on whether $g$ is in $W_R(k)$ or in $W_L(k)$ we 
prove that the
three properties hold in the opposite block $B_L(k)$ or in $B_R(k)$, respectively. Again,
it suffices to
discuss the first case. By Lemma \ref{rVis} $g$ sees all cells $R_{1,j}$
with $j \in B(k)$.
Condition (a) holds because $B_L(k) \subseteq B(k)$. \\
We prove  condition (b) by contradiction assuming  that $c  \in U_{1,j}$
and $c \in M_{d_m(k)+1,j}$
for some $j \in B_L(k)$. Since $g$ is in the right wing
of $k$, it can't see any cell of depth $d_m(k)+1$ in the left wing. Thus
$c \in M_{d_m(k)+1,j}$ implies the existence of another c-colored guard
$g'$, that sees
$R_{d_m(k)+1,j} $. But again by Lemma \ref{rVis} $g'$  sees also
$R_{1,j}$ what contradicts
the uniqueness of $c$ for this cell. \\
Finally, if $c$ is not unique for  $R_{1,j}$ then there are at least two
guards with color $c$ that watch
$R_{1,j}$. Both of them watch all cells $R_{1,j'}$ with $j' \in
B(j)$ what proves
condition (c). \end{proof}

\begin{theorem}
\label{r-cf-lowerbound} For simple orthogonal polygons
$\chi_{cf}^r (n) \in
\Omega\left(\frac{\log{\log{n}}}{\log{\log{\log{n}}}}\right)$.
\end{theorem}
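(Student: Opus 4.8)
The plan is to pass first from the polygon to a purely combinatorial statement. Since $S_1$ has $O(1)$ vertices and $S_{m+1}$ is built from two copies of $S_m$ plus a bounded-size gadget, we have $|S_m|=\Theta(2^m)$, so an $n$-vertex spike polygon has $m=\Theta(\log n)$. Hence the theorem reduces to showing $\chi^r_{cf}(S_m)\in\Omega(\log m/\log\log m)$. By Proposition~\ref{r-visAdmiss} every conflict-free r-guarding of $S_m$ induces a tableau ${\mathcal M}(\gamma)$ satisfying the cf-property, monotonicity and the left-right rule; I will call such a tableau \emph{admissible}. From now on I would forget the geometry and argue only about admissible tableaux: it suffices to prove that an admissible $t$-color tableau on $S_m$ forces $m$ to be small in $t$, and I will track everything through the block tree, noting that each block $B(k)$ together with the appropriate rows is itself (a shifted copy of) a spike $S_{\pi_2(k)+1}$, so the notion of admissibility is self-similar under descent.

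I would set up an induction on the number of colors. Let $F(t)$ denote the largest $m$ for which an admissible tableau on $S_m$ with color set $[t]$ exists. The goal is a recursion of the form $F(t)\le t^{O(1)}\cdot F(t-1)+O(1)$, together with $F(1)=O(1)$. This yields $\log F(t)=O(t\log t)$, i.e. $F(t)$ grows like a power of $t!$; solving $\log m=O(t\log t)$ for $t$ gives exactly $t\in\Omega(\log m/\log\log m)$, and substituting $m=\Theta(\log n)$ produces the claimed bound. The whole content is therefore to \emph{eliminate one color while losing only a factor $\mathrm{poly}(t)$ in the dimension}; it is precisely this polynomial (rather than constant) loss that is responsible for the $\log\log\log n$ gap to the $O(\log\log n)$ upper bound of Theorem~\ref{cfTheorem}.

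For the induction step I would start at the central column $k_1$ of $S_m$, whose single top cell $R_{1,k_1}$ must carry a unique color $c_1$ by the cf-property. By the left-right rule, conditions (a), (b), (c) hold in one of the two subblocks of $k_1$; without loss of generality this is the left subblock $L=B_L(k_1)$, a copy of $S_{m-1}$. The plan is then to run a sequence of at most $t$ descents down the block tree of $L$: at the current center column I again invoke the cf-property and the left-right rule, using (a) and (c) to follow the \emph{horizontal} spread of $c_1$ (the columns where $c_1$ is unique at the top form an ancestor-closed subtree) and using (b) together with the monotonicity of multiplicities (Lemma~\ref{uniqueInterval}) to bound how far \emph{downward} $c_1$ can remain unique. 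The aim of these $\le t$ steps is to corner $c_1$ into a shallow, narrow region, thereby exhibiting a sub-block $B(k^\ast)$ of dimension $\ge (m-1)/\mathrm{poly}(t)$ in which $c_1$ is never a unique color in any cell. On this sub-block I delete every occurrence of $c_1$ from all multisets: the cf-property survives (each cell still had a unique color $\ne c_1$), monotonicity is preserved since the same quantity is subtracted on both sides of each inclusion, and one re-checks the left-right rule for the sub-spike. The result is an admissible $(t-1)$-color tableau on a spike of dimension $\ge (m-1)/\mathrm{poly}(t)$, to which the induction hypothesis applies and delivers the recursion.

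The main obstacle is the cornering lemma inside the induction step, i.e. proving that $c_1$ genuinely cannot keep serving as a unique color throughout a large block. The difficulty is that $c_1$ may start non-unique at a top cell (multiplicity $\ge 2$) yet, by monotonicity, become unique only deeper in the same column, so the horizontal conditions (a), (c) alone do not suppress it; one must combine them with the depth bound from (b) and with Lemma~\ref{rVis} (a guard seeing a cell sees the whole top of its block), so that the persistence of $c_1$ over a block of dimension exceeding $(m-1)/\mathrm{poly}(t)$ would force two equally colored guards into a common cell and contradict uniqueness. Getting the bookkeeping of these $t$ descents to cost only a polynomial-in-$t$ factor, and verifying that the reduced tableau remains admissible (the left-right rule for $B(k^\ast)$ is the delicate point), is where I expect essentially all the work to lie.
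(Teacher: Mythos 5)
Your global architecture---spike polygons, the reduction $m=\Theta(\log n)$, Proposition~\ref{r-visAdmiss} as the bridge to a combinatorial tableau, and an induction on the number of colors with a recursion of type $F(t)\le t\cdot F(t-1)+O(1)$ and the resulting factorial numerology---coincides with the paper's. The gap is at the heart of your induction step, the ``cornering lemma'', which you correctly flag as carrying all the weight but do not prove, and which I do not believe follows in the form you state from the properties you have available. The left-right rule controls a color only in the \emph{top row}: condition (c) says that if $c_1$ is non-unique in $R_{1,j}$ then it is non-unique in $R_{1,j'}$ for all $j'\in B(j)$, and condition (b) excludes $c_1$ below row $d_m(k)+1$ only in columns where $c_1$ \emph{is} unique at the top. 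In a column with $m_{1,j}(c_1)\ge 2$, monotonicity (Lemma~\ref{uniqueInterval}) permits the multiplicity of $c_1$ to drop to $1$ at any greater depth, and Proposition~\ref{r-visAdmiss} states the left-right rule only for top cells, so you have no handle on these deeper unique occurrences. Consequently nothing forces the existence of a large sub-block in which $c_1$ is \emph{nowhere} unique, and the plan to delete $c_1$ and recurse stalls. (Your quantitative target, a sub-block of dimension $(m-1)/\mathrm{poly}(t)$, also has no visible source: the descent you describe loses depth additively per stage, not by a multiplicative factor.)

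The paper's induction step sidesteps this entirely by never trying to eliminate one color from a whole sub-tableau. Each of the $\le t$ stages introduces a \emph{fresh} color $c_s\in U_{1,k_s}$ and records only the weak fact that $c_1,\dots,c_s$ fail to be unique in the top row of the next, smaller block; no color is ever deleted. Termination comes from one of two different contradictions: either at some stage $c_s$ is unique at the top of all $2^{m'-1}$ designated sub-block centers, in which case a \emph{geometric} argument (pulling guards down onto a sub-polygon $P_s\cong S_{m'}$ and using Lemma~\ref{rVis} to show that any $c_s$-guard there would destroy uniqueness at some $R_{1,j_l}$) yields a cf-guarding of $S_{m'}$ with $t-1$ colors, contradicting the induction hypothesis; or all $t$ stages run and then $U_{1,k_{t+1}}=\emptyset$ contradicts the cf-property at a single cell. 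Note also that your wish to ``forget the geometry'' after Proposition~\ref{r-visAdmiss} is premature: a purely combinatorial descent needs the left-right rule at every depth $i$ (not just $i=1$) together with closure under block restriction and row deletion, which is exactly what the paper later formalizes as $t$-conformity for the l-visibility bound. As stated, Proposition~\ref{r-visAdmiss} does not support the descent you propose.
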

\begin{proof}(Sketch)
Let  $m(t)$ be defined by $m(1)=2$ and $m(t)=1+t \cdot
m(t-1)$ for
$t \geq 2$.

\noindent
{\em Claim:}
$ \ \chi_{cf}^r(S_{m(t)}) > t$.

\noindent
It is easy to  deduce the theorem from the claim. A simple inductive
argument shows 
$m(t) \leq (t+1)!$ and thus $\chi_{cf}^r(n) > t$ for some  $n \leq
2^{(t+1)!+1}$,
because this  is an upper bound on the vertex number of   $S_{m(t)}$.
This inequality is equivalent to $\log{n}  \leq (t+1)!+1$ what implies
$\log{\log{n}} \in O(t \log{t})$ and, finally,
$t \in \Omega\left(\frac{\log{\log{n}}}{\log{\log{\log{n}}}}\right)$.

We prove the claim by induction on $t$. For the base case $t=1$
we must show that it is impossible to guard $S_2$ conflict free with one
color.
Suppose the opposite and consider the corresponding multicolor tableau
$\mathcal{M}=(M_{1,1},M_{1,2},M_{1,3},M_{2,1}M_{2,3})$. The only way to
fulfill
the uniqueness condition is to set $M_{i,j}=U_{i,j}=\{1\}$ for all pairs
$(i,j)$ and color $1$. This already  contradicts  condition
(b) of the
left-right rule applied to the situation $1 \in U_{1,2}$.

Next, we illustrate the induction step in detail for the step from
 $t=1$ to $t=2$ with $m(1)=2$ and $m(2)=5$. We prove it by contradiction.
Suppose there  is an r-guard set $G$ and a
coloring $\gamma: G \rightarrow [2]$ that is a conflict-free guarding of $S_{5}$
and let $\mathcal{M}(\gamma)$ be the corresponding  multicolor tableau.
A contradiction will be derived by  a sequence of at most two
cutting stages with the goal to identify a subpolygon $S_{2}$ in $S_5$ that has a
conflict-free r-guarding with only one  color.\\
We start with a unique color $c_1 \in [2]$ of the top cell $R_{1,16}$ in the central column
$k_1=16$ of $S_{5}$. W.l.o.g.~the corresponding guard $g_1$ is located in the right wing
$W_R(16)$ and the three conditions of the left-right rule apply for all $j \in B_L(16)$. \\ 
The  subblocks $B(4)$ and $B(12)$  cover $B_L(16)$ with the exception
of the separating column $8$.  Considering the two central top cells $R_{1,4}$ and $R_{1,12}$
(the green cells in  Figure \ref{r-vis-indStep1})
 we distinguish  two cases:\\
  (1): $\ \forall_{j \in \{4,12\}}\ c_1 \in U_{1,j}$\\
  (2): $\ \exists_{j \in \{4,12\}} \ c_1 \not\in U_{1,j}$\\
\begin{figure}
\centering
    \includegraphics[width=0.6\columnwidth]{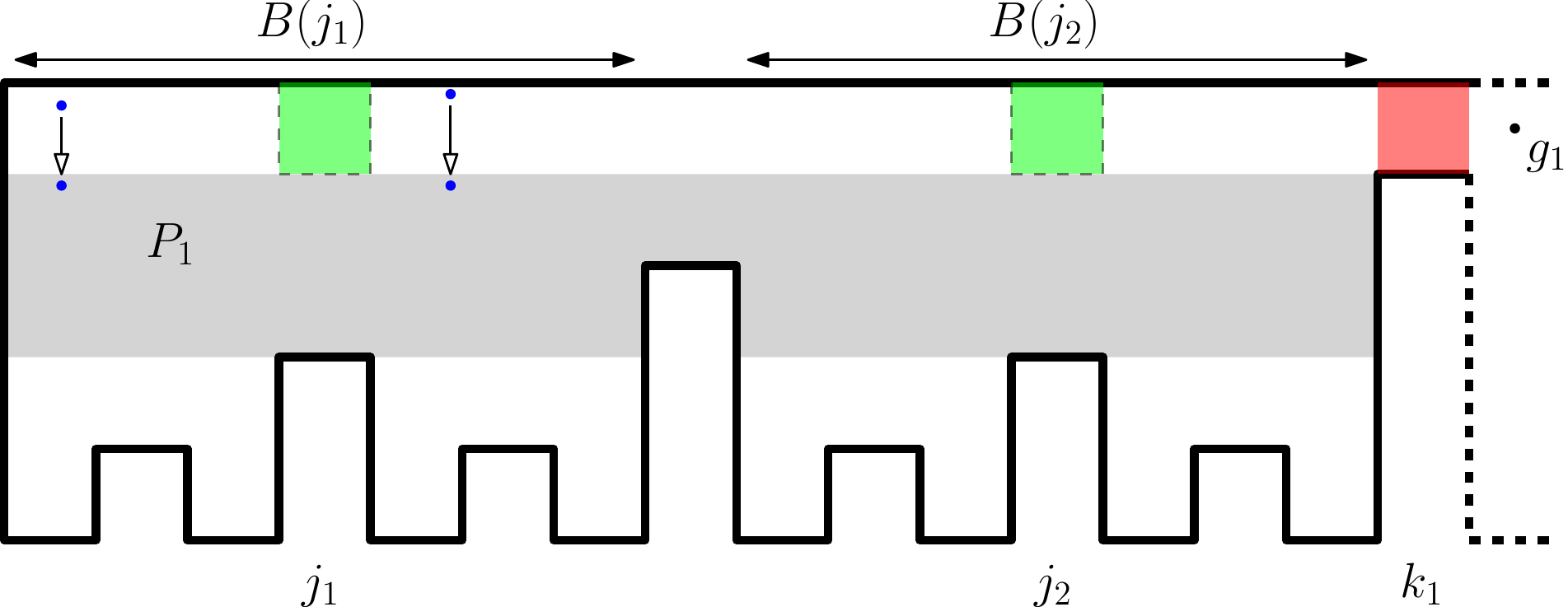}
\caption{The first subdivision stage in $S_5$: Case 1 holds if  $c_1$ is
unique for both green cells, $j_1=4$ and $j_2=12$.
Then $P_1$ would have a conflict-free r-guarding with only one color, a
contradiction.}
\label{r-vis-indStep1}
\end{figure}
Whenever Case (1) occurs  this is a stopping
rule, because  one can directly identify a subpolygon  with the
shape of $S_{2}$ together a conflict-free guarding that uses  one color only, a contradiction.
To that end we construct the subpolygon $P_1$ consisting of all cells
$R_{i,j}$ with $2 \leq i \leq 3$ and $j \in B_L(k_1)$, the  grey shaded region 
in Figure \ref{r-vis-indStep1}.
One can make two basic observations about $P_1$:
\\
(i) The shapes of $P_1$ and  $S_{2}$ are the same in the sense that 
$P_1$ is a stretched version of $S_2$ and their
decompositions  into r-visibility cells are isomorphic. 
\\
(ii) Let $G_1$ be the set of all guards from $G$ that  are positioned in
$P_1$. We extend it to a
set $G_1^{+}$ by pulling down
all guards in cells above $P_1$ onto the top edge of $P_1$. In Figure \ref{r-vis-indStep1}
 this is illustrated by small downarrows. Then  $G_1^{+}$
with the original coloring $\gamma$ is a cf-guarding of
$P_1$ with one color only because color $c_1$ does not occur. \\The last
assertion is
straightforward because the presence of any $c_1$-colored guard in
$G_1^{+}$ would contradict
the uniqueness of $c_1$ for  $R_{1,4}$  or $R_{1,12}$ (the assumption of case 1).
It is also clear that $G_1^{+}$ covers $P_1$ because any original guard
for a cell $R_{i,4}$
with $i \geq 2$  remains in $G_1^{+}$ and it will cover all $R_{i,j}$
with $j \in B(4)$ as well. 
Finally the cf-condition also extends from a cell $R_{i,4}$ to all
$R_{i,j}$ with $j \in B(4)$ because all columns in $B_L(4)$ and 
$B_R(4)$ are truncated from
below at level $3$. The  argumentation applies to cells $R_{i,j}$
with $j \in B(12)$. 
\\
Observations (i) and (ii) together give a contradiction to the
inductive assumption.\\
In contrast, the ocurrence of case (2)  invokes a second  stage.
Choose one index $j \in \{4,12\}$ such that  $c_1 \not \in U_{1,j}$, set $k_2=j$
and  repeat the former procedure in the block $B(k_2)$. Remark, the left-right rule
implies $c_1 \not \in U_{1,j}$ for all $j \in B(k_2)$.  
Now the second color $c_2$ must be unique for cell $R_{1,k_2}$ and the position
of the corresponding guard $g_2$ implies that the three conditions of the left-right rule apply 
for all $j \in B_L(k_2)$ or for all $j \in B_R(k_2)$.
\begin{figure}
\centering
    \includegraphics[width=0.6\columnwidth]{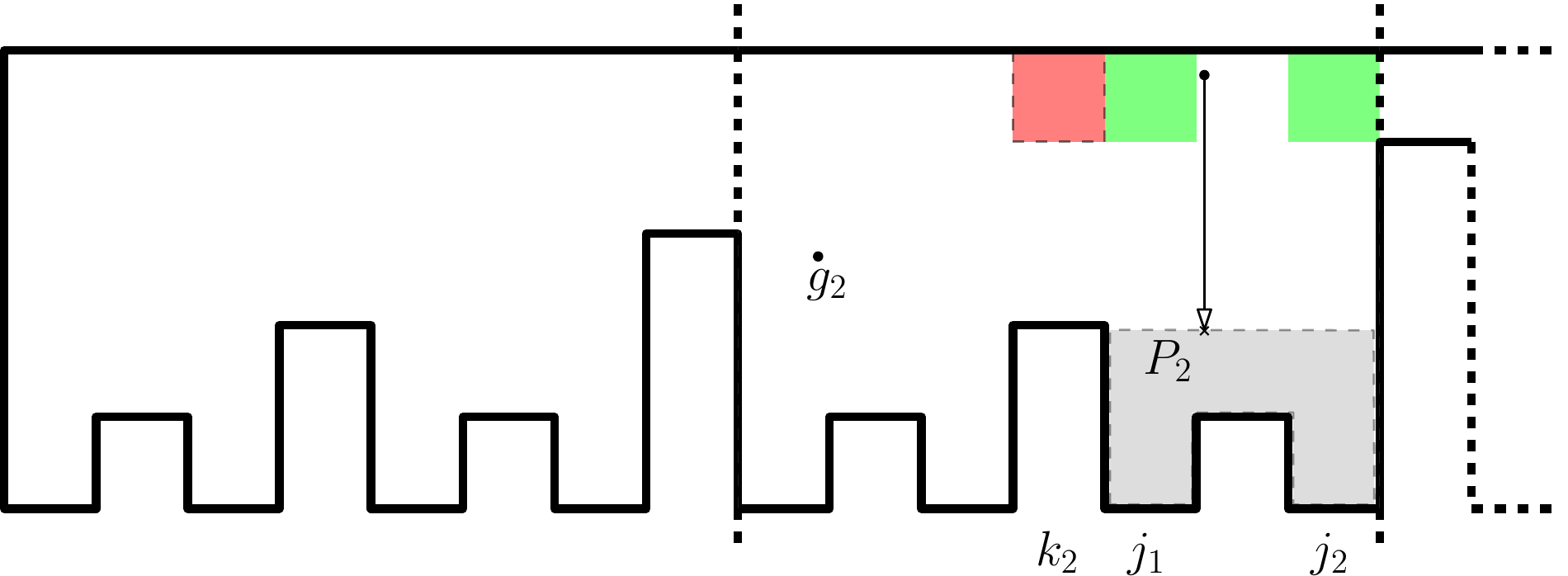}
\caption{Case 2 ocurred in the first subdivision stage because of $c_1
\not \in U_{1,k_2}$, $j_1=13$ and $j_2=15$. Then
the next subdivision stage applies to block $B(k_2)$.}
\label{r-vis-indStep2}
\end{figure}
Figure \ref{r-vis-indStep2} illustrates the situation for  $k_2=12$ and  $g_2 \in W_L(12)$.
Note, guard $g_2$ could sit also outside of block $B(12)$.
The three conditions of the left-right rule apply for all $j \in B_R(12)$. 
Again, there are two subblocks  $B(13)$ and $B(15)$ (now single columns) that cover  $B_R(12)$
with exception of the separating column $14$. The next case distinction 
refers to their  top cells $R_{1,13},R_{1,15}$: \\
  (1): $\ \forall_{j \in \{13,15\}}\ c_2 \in U_{1,j}$\\
  (2): $\ \exists_{j \in \{13,15\}} \ c_2 \not\in U_{1,j}$\\
In Case (1) one can cut out the subpolygon $P_2$ consisting of all cells $R_{i,j}$
with $4 \leq i \leq 5$ and $j \in B_R(12)$, see Figure \ref{r-vis-indStep2}, and construct 
a guard set $G_2^+$ as in case 1 before. This would result in a cf-guarding of $P_2$ without $c_2$,
a contradiction. 
\\
In  Case (2) there is a $j \in \{13,15\}$  with  $c_2 \not \in U_{1,j}$, but moreover 
$c_1 \not \in  U_{1,j}$ because Case (2) occured in the first stage. This implies $U_{1,j}= \emptyset$,
a contradiction again.
\\
Now we present the  general induction step from $t-1$ to $t$, again shown by contradiction. \\
Assume  that  there is
no conflict-free r-guarding of $S_{m'}$ with
$t-1$ colors for $m'=m(t-1)$, but there  is an r-guard set $G$ and a
coloring $\gamma: G \rightarrow [t]$ that is a conflict-free guarding of $S_{m}$
for $m=m(t)$. Again we make use of the corresponding  multicolor tableau
$\mathcal{M}(\gamma)$. \\
A contradiction will be derived by  a sequence of at most $t$
cutting stages. \\
Stage $s$ ($s \in [t]$)  always starts with the precondition that 
there is a column $k_s \in [2^m-1]$ of depth $d_m(k_s)=1+(s-1)m'$,
the block $B(k_s)$ of $S_m$ with $N_s= 2^{m-(s-1)m'}-1$ 
columns and a set $C_{s-1} \subseteq [t]$ of $s-1$ colors such that 
$c \not \in U_{1,j}$ for all $j \in B(k_s)$ and for  all $c \in C_{s-1}$.
The first stage starts with $k_1=2^{m-1}$ (the central column of $S_m$),
$N_1=2^m-1$, $C_0 = \emptyset$ and an empty precondition.
Each stage results in a case distinction where the ocurrence of the first case
would finish the proof by a contradiction with the inductive assumption,
whereas the second case implies  the precondition of the next stage.
Since the precondition of  stage $t+1$ states a contradiction of the form $U_{1,k_{t+1}} = \emptyset$ 
(because $C_{t}=[t]$ is the set of all colors), it  won't be necessary to execute that stage. 
\\
Now, suppose that the precondition of a stage $s \leq t$ is fulfilled in a block $B(k_s)$
with the color set $C_{s-1}=\{c_1,c_2, \ldots c_{s-1}\}$. Choose some color
$c_s \in U_{1,k_s} \not= \emptyset$ ($c_s \not \in C_{s-1}$ by the  precondition)
and consider  the corresponding guard $g_s$ in the left or 
right wing of column $k_s$. By symmetry it is sufficient to discuss the 
first case $g_s \in W_L(k_s)$ where the three conditions of the left-right rule
hold for all $j \in B_R(k_s)$. Let $J_s=\{j_1,j_2, \dots j_K\}$ be the 
set of all columns of depth $1+sm'$ in  $B_R(k_s)$. Note that this condition implies
for all $j_l \in J_s$ that $d_m(j_l)=d_m(k_s)+m'$ and thus $K=|J_s|=2^{m'-1}$.
The new case inspection applies to the top cells of the rows $j_l \in J_s$: 
\\
  (1): $\ \forall_{l \in  [K]}\ c_s \in U_{1,j_l}$\\
  (2): $\ \exists_{l \in  [K]} \ c_s \not \in U_{1,j_l}$\\
If Case (2) occurs with $c_s \not \in U_{1,j_l}$ for a $j_l \in J_s$ then $c_s \not \in U_{1,j}$ for all $j \in B(j_l)$
by condition (c) of the left-right rule. This immediately implies the precondition for the next
stage with $k_{s+1}=j_l$ and $C_{s}=  \{c_1,c_2, \ldots c_{s}\}$. \\
If Case (1) occurs, we consider the polygon $P_s$ formed by the union of all cells
$R_{i,j}$ with $j \in  B_R(k_s)$ and $2+(s-1)m' \leq i \leq 1+sm'$. 
As discussed above the cell decomposition of the polygon $P_s$ is isomorphic to that
of $S_{m'}$. Moreover extending the set $G_s$ of original guards in $P_s$
by pulling down all guards that sit direcly above $P_s$  onto the top edge of 
$P_s$, we obtain a cf-guarding of $P_s$ with  $t-1$ colors, because 
 $c_s$ can't occur as a color in the extended guard set $G_s^+$.
This contradicts the inductive assumption and finishes the proof.   
\end{proof}

Any attempt to adapt  this proof to cf-guardings of $S_m$
with respect to line
visibility encounters the following  problems. \\
{\bf  Problem 1}: It is impossible to subdivide the polygon into a finite set of
visibility cells such that any two points
in a cell would have the same visibility polygon.\\
Solution: The guard set watching a given cell $R_{i,j}$ is replaced by
the guard set watching a  single special
point in the cell.
We always choose the midpoint $p_{i,j}$ of the lower side of $R_{i,j}$.\\
{\bf  Problem 2}: Guards from the left wing of a column $k$ can possibly see points
in the right wing that are much deeper than
$d_m(k)$.\\
Solution:
 The heights of  rows in $S_m$ will be stretched in an
appropriate way such that no
guard from the left wing of a column $k$ can see a special  point $p_{i,j}$ with
$j \in B_R(k)$ and $i > d_m(k)$.\\
{\bf  Problem 3}: A guard that sits deeper than $d_m(k)$ in  $B_L(k)$ can possibly watch
points in column $k$ and even points in $B_R(k)$.\\
Solution: One can't avoid this, but the stretching of  rows will assure that it won't see
points in $B_{RR}(k)$.
It turns out that the left-right rule must be relaxed in such a way that
conditions (a), (b) and (c)
do not hold in whole opposite half block, but they hold (in slightly
modified form) at least in a quarter
subblock. Formally we will refer to this fact by a quantified formula of
the type
$\exists_{ XY \in \{LL,LR,RL,RR\}} \ \forall_{ j \in B_{XY}} \ldots$\\
{\bf  Problem 4}: Pulling a guard to another position like in the construction of
the guard set $G_1^{+}$ for the subpolygon $P_1$
changes the visibility range of the guards and might result in a guard
set that does not cover the same subpolygon it
covered before.\\
Solution: Any conflict-free guarding of $S_m$ will be translated into  purely
combinatorial properties of
the corresponding multicolor tableau, such that
concrete  guard positions don't play any role in the subsequent lower bound proof.

\subsection{Stretched spike polygons and {\bf $t$}-conform tableaux}
\ \ \ \ For the purpose of forcing similar properties for l-visibility as we
used for r-visibility
we introduce a vertically stretched version ${S_m^{\downarrow}}$ of  $S_m$ with the
following geometric properties:
\begin{itemize}
\item The width of each column is $1$ and hence the total width of  
$S_m^{\downarrow}$ is
$2^m-1$.   
\item We  distinguish between combinatorial and geometric depth of a
column:
While  $d_m(k)=m-\pi_2(k)$ is still used for the combinatorial depth,  we
want  the  geometric depth to  be  $d^{\downarrow}_m(k)=
2^{(d_m(k)-1)m}$.
Therefore the height of the first row is $h_1=1$ and the height of
the $i$-th row  $h_i= 2^{im}-2^{(i-1)m}$.
\end{itemize}
Consider the decomposition of $S_m^{\downarrow}$ into r-visibility cells
$R_{i,k}$ and let
$p_{i,k}$ be the midpoint at the bottom side of  $R_{i,k}$.
If $\gamma:G \rightarrow [t]$ for guard set  $G $ is a  conflict-free l-guarding of
$S_m^{\downarrow}$, then
let $M_{i,k}^{\downarrow}$ be the multiset of all colors of guards that see
$p_{i,k}$ and ${\mathcal
M}^{\downarrow}(\gamma)=\left\{M_{i,k}^{\downarrow} \ | \ k \in [2^m-1], i
\in [d_m(k)] \right\}$
the corresponding multicolor tableau.

The following two observations establish  similarities between 
l-visibility in   $S_m^{\downarrow}$
and r-visibility in $S_m$ and substitute Lemmata \ref{rVis} and \ref{uniqueInterval}.
\begin{lemma}
\label{lVis1}
Let $g$ be a guard  in $S_m^{\downarrow}$, $k$ a column
of this polygon with
combinatorial depth $d=d_m(k)$ and geometric depth
$d_m^{\downarrow}(k)=2^{(d-1)m}$.
If $g \in W_R(k)$ ($g \in W_L(k)$)
then $g$ can't see any point $p$ at depth $ d^{\downarrow}(p) \geq
2^{dm}$ in the left (right) block
of $k$, especially $g$ can't see
any point $p_{i,j}$ with $j \in B_L(k)$ ($j \in B_R(k)$) and $i > d$.
\end{lemma}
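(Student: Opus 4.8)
The plan is to prove Lemma~\ref{lVis1} by a direct geometric argument about how far a guard in one wing can see into the opposite block, exploiting the super-exponential row heights of $S_m^{\downarrow}$. I would first set up coordinates so that column $k$ has combinatorial depth $d=d_m(k)$, meaning the spike in column $k$ descends to geometric depth $d_m^{\downarrow}(k)=2^{(d-1)m}$. By symmetry it suffices to treat the case $g\in W_R(k)$ and show that $g$ cannot l-see any point $p$ at geometric depth $d^{\downarrow}(p)\ge 2^{dm}$ lying in the left block $B_L(k)$; the claim about the special points $p_{i,j}$ with $j\in B_L(k)$ and $i>d$ then follows, since such a point sits at the bottom of row $i\ge d+1$ and hence has geometric depth at least $2^{dm}$ by the height formula $h_i=2^{im}-2^{(i-1)m}$.

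The key geometric obstruction is the spike of column $k$ itself. Since $g$ lies strictly to the right of the midline of column $k$ (that is, $g\in W_R(k)$) and the target point $p$ lies in $B_L(k)$, strictly to the left, the segment $gp$ must cross the vertical strip occupied by column $k$. The plan is to show that the spike in column $k$, whose tip reaches only down to depth $d_m^{\downarrow}(k)=2^{(d-1)m}$, blocks this segment whenever $p$ is deeper than $2^{dm}$. Concretely, I would consider the vertical line through column $k$ and compare the depth at which the straight segment $gp$ crosses it against the spike tip depth. The crossing depth is a convex combination of the depth of $g$ and the depth of $p$; because $g$ itself can descend at most to the bottom of the polygon and $p$ is at depth $\ge 2^{dm}$, the crucial point is to bound the horizontal distances. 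The guard $g$ and the point $p$ are on opposite sides of column $k$, so the segment must pass through the full horizontal extent of the column's spike region; the quantitative gap between $2^{(d-1)m}$ (tip of the blocking spike) and $2^{dm}$ (depth of $p$) is a factor of $2^{m}$, which is exactly the stretching built into the construction to guarantee that the interpolated crossing depth stays above the spike tip, so the segment is cut off and l-visibility fails.

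I expect the main obstacle to be making the interpolation bound fully rigorous, since it requires controlling both the horizontal coordinate at which $gp$ crosses column $k$ and the depth of $g$ simultaneously. The delicate point is that $g$ could itself sit very deep in $W_R(k)$, so one cannot simply argue from $g$'s depth alone; instead one must use that any guard deep in the right wing is itself horizontally separated from column $k$ by at least the column's own width, and combine this with the total width bound $2^m-1$ to cap the horizontal span of the segment. The ratio of the vertical drop ($\ge 2^{dm}-2^{(d-1)m}$) to the bounded horizontal span forces the segment to be so steep near column $k$ that it cannot clear the spike tip at depth $2^{(d-1)m}$. Once this single inequality between the interpolated depth and the spike-tip depth is established, the lemma follows immediately, and the symmetric case $g\in W_L(k)$ is identical after reflection. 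The final sentence about the special points $p_{i,j}$ is then just an application of the height formula, requiring no further geometry.
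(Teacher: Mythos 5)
Your overall strategy coincides with the paper's: reduce by symmetry to $g\in W_R(k)$ and $p\in B_L(k)$ with $d^{\downarrow}(p)\ge 2^{dm}$, and show by a slope/interpolation argument that the segment $gp$ must lie below the bottom edge of column $k$ (at depth $2^{(d-1)m}$) when it crosses that column's strip, hence is blocked at the reflex corner $q_L$. However, the quantitative bounds you propose do not close the argument. You cap the horizontal span of the segment by the total width $2^m-1$; with that cap the slope forced on the segment is only $\frac{2^{dm}-2^{(d-1)m}}{2^m-1}=2^{(d-1)m}$, and since a guard in $W_R(k)$ need only be \emph{half} a unit to the right of $q_L$, continuing at that slope for half a unit lifts the segment by only $2^{(d-1)m-1}$ --- not enough to force it below $q_L$, so no contradiction is reached. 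The paper instead bounds the horizontal distance from $p$ to $q_L$ by the width $2^{m-d}-1$ of the block $B_L(k)$, a factor of about $2^d$ smaller, which is exactly what makes the inequality strict: the slope of $\overline{pq_L}$ exceeds $2^{(d-1)m+d}\ge 2\cdot 2^{(d-1)m}$, while the slope of $\overline{q_Lg}$ is at most $2^{(d-1)m}/(1/2)=2\cdot 2^{(d-1)m}$.

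You also locate the delicacy in the wrong place. A guard sitting very deep in $W_R(k)$ is the \emph{trivial} case: the crossing depth at column $k$ is then a convex combination of two depths both exceeding $2^{(d-1)m}$, so the segment is blocked outright. The delicate case is a \emph{shallow} guard close to column $k$ --- in particular one sitting in the right half of column $k$'s own strip --- for which your claimed horizontal separation ``by at least the column's own width'' is false (it can be as small as half that width), and for which the paper's bound on the slope of $\overline{q_Lg}$ (using that $g$ is at most $2^{(d-1)m}$ higher than $q_L$ and at least half a unit to its right) is precisely what is needed. There is also a directional slip (``the interpolated crossing depth stays above the spike tip, so the segment is cut off'': the segment is cut off when the crossing depth exceeds the tip depth, not the other way around). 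With these corrections --- block width in place of total width, and the half-unit separation handled via the second slope bound --- your argument becomes the paper's proof; the concluding remark that the statement about the points $p_{i,j}$ with $i>d$ follows from the row-height formula is correct and matches the paper.
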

\begin{proof}
By symmetry it is sufficient to study the first case with $g \in
W_R(k)$, $d^{\downarrow}(p) \geq 2^{dm} $ and
$p$ a point in the subpolygon  $B_L(k)$.
Let $q_L$ be the left vertex of the horizontal polygon edge in column $k$ and
consider the  slopes $s_1$ and $s_2$ of the lines $\overline{pq_L}$ and
$\overline{q_Lg}$.
Since the width of $B_L(k)$ is $2^{m-d}-1$ and
$d^{\downarrow}(p)-d^{\downarrow}(q_L) \geq 2^{dm} - 2^{(d-1)m} =(2^m-1)
\cdot 2^{(d-1)m} $  we get
\[s_1 \geq \frac{(2^m-1) \cdot 2^{(d-1)m}}{2^{m-d}-1}
=\frac{(2^m-1) \cdot 2^{(d-1)m}}{2^{-d}(2^m-2^d) }
> \frac{2^{(d-1)m}}{2^{-d}}= 2^{(d-1)m+d}  \]
Since $g$ is in the right wing of $k$ it is at least one half unit right
of $q_L$ and it is at most
$d_m^{\downarrow}(k)=2^{(d-1)m}$ higher than $q_L$
\[s_2 \leq \frac{2^{(k-1)m}}{1/2}  = 2^{(d-1)m+1}\leq  2^{(d-1)m+d} \]
Thus, $s_1>s_2$ what shows that the corner at $q_l$ blocks the
l-visibility between $p_{i,j}$ and $g$.
\end{proof}
\begin{lemma}
\label{lVis2}
Let $g$ be an l-guard watching a point $p_{i,k} \in S_m^{\downarrow}$.
Then for all $i' \leq i$ and for all $j \in B_L(k)$ or for all $j \in
B_R(k)$
$g$ sees also   $p_{i',j}$.
\end{lemma}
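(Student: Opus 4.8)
The plan is to prove a stronger, symmetric statement and read the stated ``$B_L(k)$ or $B_R(k)$'' alternative off from where $g$ sits: for \emph{every} guard $g$ seeing $p_{i,k}$, at least one of the two half-blocks is \emph{completely} visible down to depth $i$. Throughout I write $d=d_m(k)$ and, for the geometric depth of the lower side of row $\ell$, $\delta_\ell=2^{(\ell-1)m}$, so that $\delta_d=d^\downarrow_m(k)$ and consecutive rows differ by the large factor $2^m$; the targets are the points $p_{i',j}$ with $i'\le i$ and $j$ in the relevant half-block.

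The key structural observation — the analogue of the rectangle-extension used in Lemma~\ref{rVis} — is that the shallowest columns of the two half-blocks, namely $l(k)$ and $r(k)$, already have combinatorial depth $d+1$. Hence \emph{every} column of $B_L(k)\cup\{k\}\cup B_R(k)$ is solid at least down to depth $\delta_d$, so the full-width box spanning the block from its leftmost to its rightmost column down to depth $\delta_d$ is an axis-parallel rectangle $R\subseteq S_m^\downarrow$. Moreover each half, with column $k$ removed, stays solid all the way down to depth $\delta_{d+1}$ (all its columns have depth $\ge d+1$). Both $p_{i,k}$ and all target points lie in $R$.

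With this in hand the argument splits by the position of $g$. If $g\in R$, convexity of $R$ immediately gives that $g$ sees \emph{both} half-blocks, so either alternative holds. If $g$ lies within the horizontal extent of the block but below $R$, then $g$ sits in one half, say $B_L(k)$ (it cannot be in column $k$, which is empty below $\delta_d$). Here a slope estimate \emph{dual} to Lemma~\ref{lVis1} enters: to reach the shallow far point $p_{i,k}$, the segment $gp_{i,k}$ must cross the left edge of column $k$ \emph{above} the reflex corner that tops the spike below column $k$, which forces $x_k-x_g\ge\tfrac12(\delta_g-\delta_i)/(\delta_d-\delta_i)$. Once $\delta_g\ge\delta_{d+1}=2^{dm}$ the right-hand side is $\ge\tfrac12(2^m-1)$, whereas the block half-width is at most $2^{\pi_2(k)}-1\le2^{m-1}-1<\tfrac12(2^m-1)$; the contradiction forces $\delta_g<\delta_{d+1}$. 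But then $g$ lies in the solid half-rectangle of depth $\delta_{d+1}$, and convexity again yields all of $B_L(k)$.

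The remaining, and genuinely delicate, case is $g$ \emph{outside} the block horizontally, say to the left. Now $gp_{i,k}$ must graze the reflex vertex on the block's left boundary (the bottom corner of the shallower neighbouring column), and I would argue via the line through $g$ and that vertex. The exponential stretching guarantees that ``$g$ sees the far point $p_{i,k}$'' pins this grazing line steeply enough that one whole half-block stays on its interior side: the near half $B_L(k)$ when $g$ lies below the boundary step, and the far half $B_R(k)$ when $g$ lies above it (a gentle, almost-horizontal sight line then clears the corner for every column of the block at depth $\le\delta_d$). Verifying this dichotomy — that seeing $p_{i,k}$ leaves no target of the chosen half in the shadow of the boundary corner — is the main obstacle, and it is exactly where the $2^{(\ell-1)m}$ row heights are needed; the computation runs parallel to that in Lemma~\ref{lVis1}. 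Lemma~\ref{lVis2} then follows by the left–right symmetry of $S_m^\downarrow$.
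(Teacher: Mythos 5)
Your Cases A and B are sound and essentially reproduce two of the paper's three cases: A is the r-visibility case (Lemma~\ref{rVis}), and B is the paper's ``guard deeper than $d_m^{\downarrow}(k)$'' case, where the same slope estimate as in Lemma~\ref{lVis1} pins $g$ into row $d_m(k)+1$ of one half-block, after which convexity of that half-block's solid rectangle finishes the job. The genuine gap is exactly the part you defer, Case C, and you have both misjudged where the difficulty lies and proposed a dichotomy that is not the right one. The paper does not split by horizontal position at all; it splits by the \emph{vertical} relation of $g$ to $p_{i,k}$. If $g$ lies \emph{above} $p_{i,k}$ (which covers the bulk of your Case C, including every guard far outside the block near the top of the polygon), no corner-grazing computation and no stretching are needed: for $g\in W_R(k)$ and $j\in B_L(k)$ the two-segment chain $p_{i,j}\,p_{i,k}\,g$ lies in $S_m^{\downarrow}$ and is convex from above, and since every notch of $S_m^{\downarrow}$ emanates from the bottom edge, the region vertically above a point of the polygon is again in the polygon; hence the segment $p_{i,j}g$, which lies above that chain, is unobstructed. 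This one observation disposes of the case you call ``the main obstacle'' in two lines. The remaining guards of your Case C — outside the block and \emph{deeper} than $p_{i,k}$ — are either r-visible to $p_{i,k}$ (reducing to your Case A) or cannot see $p_{i,k}$ at all: the same Lemma~\ref{lVis1}-type estimate you already use in Case B, applied to the shallow boundary column just outside $B(k)$, shows the sight line cannot clear that column's reflex corner within the polygon's total width $2^m-1$.

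Concretely, your proposed rule for Case C (``near half $B_L(k)$ when $g$ lies below the boundary step, far half $B_R(k)$ when $g$ lies above it'') does not survive scrutiny: whenever $g$ sits to the left of the block and above $p_{i,k}$, the half it is guaranteed to see is the \emph{far} one $B_R(k)$ (the convex-chain direction), and the near-half claim fails for the leftmost targets of $B_L(k)$, which lie in the shadow of the boundary corner for a distant shallow guard. The configurations where your rule would give the near half turn out not to admit visibility of $p_{i,k}$ in the first place, so following your sketch you would spend effort certifying a dichotomy that is partly vacuous and partly wrong, rather than the correct one (above $p_{i,k}$ $\Rightarrow$ opposite half; below and not r-visible $\Rightarrow$ forced into row $d_m(k)+1$ of one half-block $\Rightarrow$ own half). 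In short: the skeleton of A and B matches the paper, but C needs the convexity-from-above argument, which is the one idea missing from your write-up.
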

\begin{proof}
Let  $d^{\downarrow}(g)$ be the geometric
depth of  $g$ in $S_m^{\downarrow}$.
\\
{\em Case 1:} If $g$ is even an r-guard for $p_{i,k}$ the claim follows
for all  $j \in B(k)$
by Lemma \ref{rVis}. Otherwise there are two more cases, namely that
$d^{\downarrow}(g)$ is strictly smaller or strictly larger than
$2^{(i-1)m}$ (see Figure \ref{three_guard_pos}).
\\
{\em Case 2:} $d^{\downarrow}(g) < 2^{(i-1)m}$, i.e., $g$ sees  $p_{i,k}$
from above. If
 $g \in W_R(k)$ then $g$ can see all  $p_{i,j}$ with $j \in B_L(k)$ beause
the line segments $p_{i,j}p_{i,k}$ and  $p_{i,k}g$ are contained in
$S_m^{\downarrow}$
and they form a chain that is convex from above.
If $g \in W_L(k)$ then $g$ can see all  $p_{i,j}$ with $j \in B_R(k)$ beause
the line segments $gp_{i,k}$ and  $p_{i,k}p_{ij}$ are contained in
$S_m^{\downarrow}$
and they form a chain that is convex from above. Moreover it is clear
that in $S_m^{\downarrow}$
any guard that sees a point  $p_{i,j}$ will see also all points directly
above, especially
the points $p_{i',j}$ with $i'<i$.
\\
{\em Case 3:} $d^{\downarrow}(g)> 2^{(i-1)m }$, i.e. $g$ sees  $p_{i,k}$
from below.  
Since  $d^{\downarrow}(g) \leq d_m^{\downarrow}(k)$ would imply case 1,
we can additionally
assume  $d^{\downarrow}(g) > d_m^{\downarrow}(k)$, i.e. $g$
is in a cell
$R_{i',j'}$ with $i' \geq d_m(k)$ and $j' \in B_L(k)$ or $j' \in B_R(k)$.
Now we can apply Lemma \ref{lVis1} with $p_{i,k}$ in the role of the
guard $g$
and $q$ in the role of a point $p$ watched by $g$. It turns out  that
$d^{\downarrow}(g) < 2^{d_m(k)m}$, i.e., $g$ lies in row $d_m(k)+1$ of
$S_m^{\downarrow}$.
It follows that depending whether $g$ lies in $B_L(k)$ or $B_R(k)$
it sees all $p_{i,j}$ with $j \in B_L(k)$ or $j \in B_R(k)$ (and all
points directly above as well).
\end{proof}

A  tableau ${\mathcal M}^{\downarrow}(\gamma)$ is in 
standard form if it has  $m$ rows and $N=2^m-1$
columns. But by various constructions, for example restricting it to a single block, one creates  
a tableau having $m$ rows and $N'=2^{m'}-1$ columns for some $m'<m$.
The following definition of $t$-conformity specifies some necessary, but not sufficient conditions a multicolor tableau has if it stems 
from a conflict-free $t$-coloring  of a stretched spike polygon. The advantage is that $t$-conformity is preserved when acting on the tableau with 
operations defined below.

\noindent
\begin{definition}
\label{t-admis}
Let  $m' \leq m$ be natural numbers and $N'=2^{m'}-1$.
A scheme of  multisets over the set $[t ]$ of the form
${\mathcal M}=\left(M_{i,k} \, | \, k \in [N'],  i \in [d_m(k)] \right)$
is called a  
$t$-conform $(m \times N')$-multicolor tableau
if the following properties hold:
\begin{enumerate}
\item $\forall_{ k \in [N']} \ \forall_{   i \in [d_m(k)]} \ U_{i,k} \not=
\emptyset$.
\item $\forall_{ k \in [N']} \ \forall_{  1 \leq i < i'  \leq d_m(k)} \
M_{i',k} \subseteq M_{i,k}$.
\item $\forall_{ k \in [N']}\ \forall_{   i \in [d_m(k)]} \ \forall_{  c \in
U_{i,k}} \
\exists_{ XY \in \{LL,LR,RL,RR\}} \ \forall_{ j \in B_{XY}(k)} \quad Q(c,k,j)$ \\
where the predicate  $Q(c,k,j) $ is the conjunction of 
three conditions:
\begin{enumerate}
\item $c \in M_{i,j}$
\item  $c \in U_{i,j} \rightarrow  c \not \in M_{d_m(k)+2,j}$  
\item $ c \not \in U_{i,j} \rightarrow \exists_{ Z \in \{L,R\}} \ \forall_
{j' \in B_Z(j)}\, c \not \in U_{d_m(k),j'}$.
\end{enumerate}
\end{enumerate}
\end{definition}

Note the first two properties are identical with
those 
in Proposition \ref{r-visAdmiss}.
The third one, however,  is a proper relaxation of the left-right rule  there.
Thus, any tableau ${\mathcal M}(\gamma)$ resulting from a conflict-free
r-guarding
of $S_m$ with $t$ colors is also $t$-conform.
\begin{proposition} \label{l-visAdmiss}
The multicolor tableau  ${\mathcal M}^{\downarrow}(\gamma)$ for a
conflict-free
l-guarding of the polygon $S_m^{\downarrow}$ with $t$ colors is
$t$-conform.
\end{proposition}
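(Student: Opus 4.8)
The plan is to verify the three defining properties of a $t$-conform tableau (Definition \ref{t-admis}) for ${\mathcal M}^{\downarrow}(\gamma)$ one at a time, running exactly parallel to the proof of Proposition \ref{r-visAdmiss} but substituting the two l-visibility lemmas for the r-visibility ones: Lemma \ref{lVis2} takes over the role of Lemma \ref{rVis} (propagation of visibility into a surrounding block), while Lemma \ref{lVis1} supplies the quantitative \emph{blocking} that the rigid geometry of $S_m$ provided for free. Property 1 (the cf-property $U_{i,k}\neq\emptyset$) is nothing but the definition of a conflict-free guarding applied to the special point $p_{i,k}$, so there is nothing to prove. Property 2 (monotonicity) follows from the observation recorded inside the proof of Lemma \ref{lVis2}, namely that in $S_m^{\downarrow}$ a guard seeing a point sees every point directly above it in the same column; hence for $i<i'$ every guard contributing a color to $M_{i',k}^{\downarrow}$ already sees $p_{i,k}$ and contributes the same color there, giving $M_{i',k}^{\downarrow}\subseteq M_{i,k}^{\downarrow}$ as multisets.

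All the work is in Property 3, the relaxed left-right rule. Fix a cell $R_{i,k}$ and a unique color $c\in U_{i,k}$ with witnessing guard $g$; by symmetry assume $g\in W_L(k)$. By Lemma \ref{lVis2} there is a half block, which I may take to be $B_R(k)$, such that $g$ sees $p_{i',j}$ for all $i'\le i$ and all $j\in B_R(k)$; taking $i'=i$ gives condition (a) on all of $B_R(k)$, a fortiori on any quarter-subblock. For condition (b) I argue as in the r-case: if it failed at some $j$, then $c\in U_{i,j}$ yet some $c$-guard $g'$ sees $p_{d_m(k)+2,j}$, and by Property 2 $g'$ also sees $p_{i,j}$, so uniqueness forces $g'=g$, i.e.\ $g$ itself reaches row $d_m(k)+2$ of column $j$. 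Now I invoke Lemma \ref{lVis1} relative to the central column $r(k)$ of $B_R(k)$, whose combinatorial depth is $d_m(k)+1$: depending on whether $g$ lies in the left or the right wing of $r(k)$, the corner at $r(k)$ blocks $g$ from every point of geometric depth $\ge 2^{(d_m(k)+1)m}$ in the opposite quarter, that is from all $p_{i'',j}$ with $i''\ge d_m(k)+2$ in $B_{RL}(k)$ respectively $B_{RR}(k)$. Choosing that opposite quarter $XY$ makes condition (b) hold throughout $B_{XY}(k)$. This is exactly where the vertical stretching enters, since the heights $h_i=2^{im}-2^{(i-1)m}$ place the barrier of Lemma \ref{lVis1} precisely between rows $d_m(k)+1$ and $d_m(k)+2$, which explains both the weakened bound $d_m(k)+2$ and the descent from a half block to a quarter.

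Condition (c) is the delicate point. Suppose $c\notin U_{i,j}$, so at least two $c$-guards see $p_{i,j}$, each covering, by Lemma \ref{lVis2}, one half block of $j$ at all rows $\le i$. The relaxation to $\exists_{Z}$ mirrors the fact that a single l-guard is only guaranteed a half block, not the full block that Lemma \ref{rVis} provided; and the crux is that the target statement concerns the \emph{deeper} row $d_m(k)\ge i$ whereas l-visibility propagates only upward. The resolution exploits that $c\notin U_{d_m(k),j'}$ holds already when $c$ occurs at $p_{d_m(k),j'}$ with multiplicity $0$: on the side $B_Z(j)$ not doubly covered, the lone relevant $c$-guard is cut off below row $d_m(k)$ in the far part of that side by the corner barrier of Lemma \ref{lVis1}, forcing $c$ to be \emph{absent} there, while on a doubly covered side one uses multiplicity $\ge 2$. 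The main obstacle is then the simultaneous bookkeeping: conditions (a), (b) and (c) must be realized on \emph{one} common quarter-subblock, so the wing of $g$ with respect to $k$, its wing with respect to $r(k)$, and the side $Z$ selected for (c) have to be kept mutually consistent, and each invocation of Lemma \ref{lVis1} must be checked to land on the intended row under the exponential row heights. Once this consistency is arranged, ${\mathcal M}^{\downarrow}(\gamma)$ satisfies all three properties and is $t$-conform.
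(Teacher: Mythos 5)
Your proposal takes essentially the same route as the paper: properties 1 and 2 are dismissed identically, and for property 3 you use Lemma \ref{lVis2} to secure condition (a) on a half-block and Lemma \ref{lVis1}, applied at the central column $r(k)$ of that half-block, to secure condition (b) on the quarter opposite to $g$'s wing of $r(k)$ --- which is exactly the paper's choice of $XY$, only packaged uniformly instead of through the paper's explicit three-case analysis (r-visible, seen from above, seen from below) of how $g$ sees $p_{i,k}$. Your account of why the bound in (b) must be $d_m(k)+2$ rather than $d_m(k)+1$, and why one descends from a half-block to a quarter, is precisely the paper's reasoning.

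The one place you depart from the paper is condition (c), and there your added step is not justified. The paper's argument is a one-liner: since $c\notin U_{i,j}$ there is a second $c$-guard $g'$ for $p_{i,j}$, and by Lemma \ref{lVis2} $g'$ watches all $p_{i,j'}$ for $j'$ in one half-block $B_Z(j)$; together with $g$, which watches all of $B(j)$ at row $i$ by condition (a), the color $c$ then has multiplicity at least two on that side. Your extra claim --- that on the side not doubly covered $c$ is absent at row $d_m(k)$ because the lone relevant $c$-guard is cut off by the corner barrier --- does not follow from anything you set up: nothing excludes a further $c$-guard, unrelated to the two guards seeing $p_{i,j}$, that sees $p_{d_m(k),j'}$ with multiplicity exactly one. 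You are right to flag the row-index tension in (c) (hypothesis at row $i$, conclusion at row $d_m(k)\ge i$, with monotonicity of multiplicities pointing the wrong way); the paper's proof does not dwell on this, but your proposed patch does not close it. Either argue as the paper does, or supply a genuine reason why every $c$-guard reaching row $d_m(k)$ in $B_Z(j)$ must be one of the two already accounted for.
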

\begin{proof}
There is nothing to prove for the uniqueness condition and for the
monotonicity.
Now let us assume $c \in U_{i,k}$ with a
corresponding guard $g$.
By symmetry we may suppose $g \in W_R(k)$. Like in  Lemma
\ref{lVis2}
there are three cases to distinguish (see Figure \ref{three_guard_pos}):
\begin{enumerate}
\item $p_{i,k}$ is r-visible from $g$.
\item $p_{i,k}$ is not r-visible from $g$ and $p_{i,k}$ is deeper than $g$.
\item  $p_{i,k}$ is not r-visible from $g$ and $g$ is deeper than $p_{i,k}$.
\end{enumerate}
\begin{figure}
\centering
\includegraphics[scale=.5]{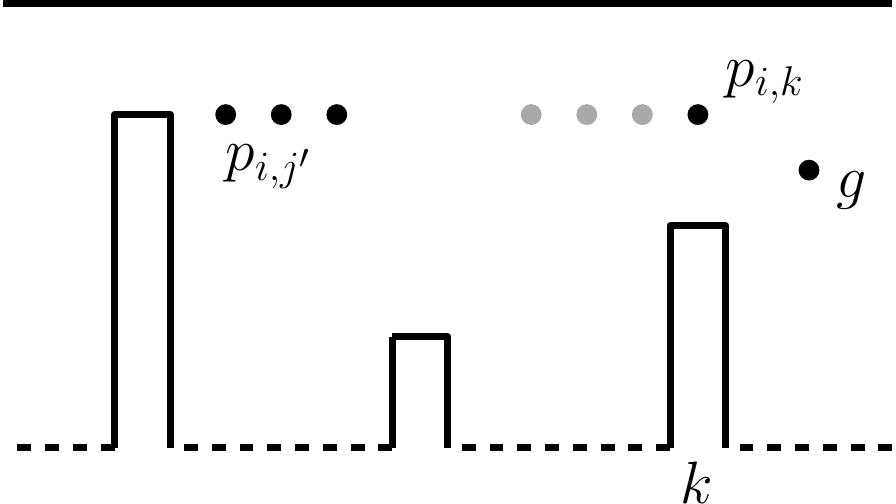}\quad\quad
\includegraphics[scale=.5]{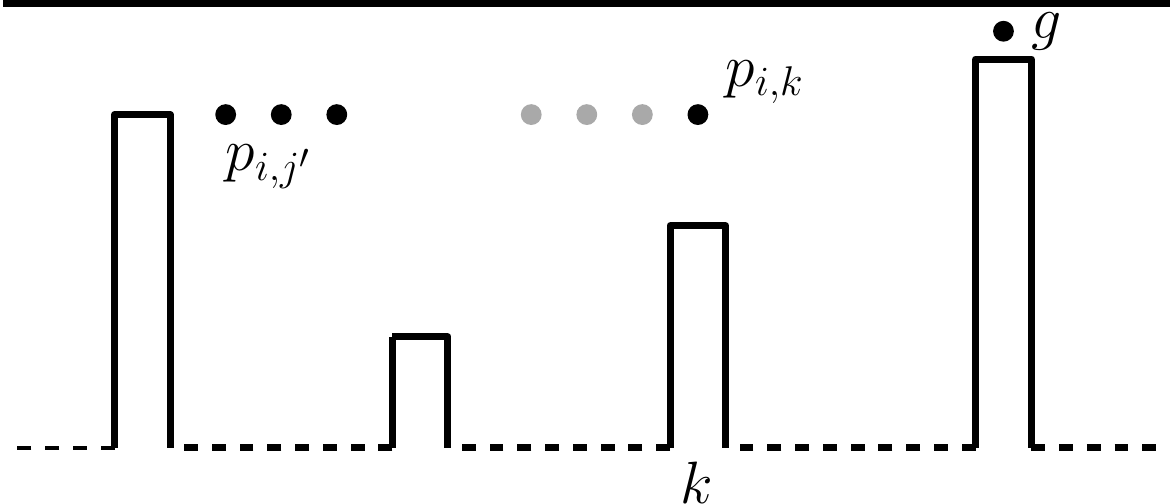}
\includegraphics[scale=.5]{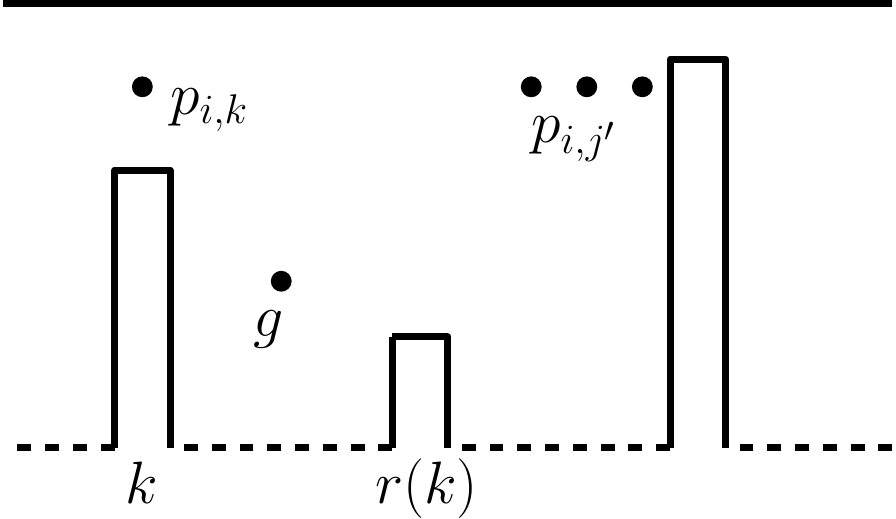}
\caption{Possible guard positions  with respect to the
point $p_{i,k}$. Note that it is impossible to
  display the exponential growth of the row heights in the drawing.}
\label{three_guard_pos}
\end{figure}
In Case 1 and Case 2 we choose $XY=LL$ (but $XY=LR$ would also work -
the gray points).
In Case 3 the choice depends on the position of $g$ relative to the central
column $r(k)$ of the block $B_R(k)$:
\[XY= \left\{ \begin{array}{ll}
  RL & \mbox{ if } g \in W_R(r(k)) \\
  RR & \mbox{ if } g \in W_L(r(k)). \\
\end{array} \right.
\]
It remains to establish the three conditions of $Q(c,k,j)$ for all
$j \in B_{XY}(k)$. Condition (a) is obvious in case 1 and case 2. In
case 3 it follows from
the fact that $g$ can't be deeper than $d_m^{\downarrow}(r(k))$ (see
Case 3 in the proof of
Lemma \ref{lVis2}).
\\
For condition (b) suppose that  $c \in U_{i,j}$. This implies that $g$
is the only guard with
color $c$ that sees $p_{i,j}$. However in all three
cases $g$ is in the wing opposite  to  block $B_{XY}(k)$ and then $g$
can't see any point of
combinatorial depth ${d_m(k)+2}$ in $B_{XY}(k)$ by Lemma \ref{lVis1}.
It's worth observing that
depth ${d_m(k)+1}=d_m(r(k))$  would not suffice in case 3.
However, any other guard with color $c$ watching $p_{d_m(k)+2,j}$ would also watch 
$p_{i,j}$ and
contradicts the uniqueness of $g$. Thus $ c \not \in M_{d_m(k)+2.j}$.
\\
Finally, let us suppose $c \not \in U_{i,j}$, then there is a second
guard $g'$
for  $p_{i,j}$. Now  we can conclude from  Lemma \ref{lVis2} that $g'$
watches
all  points $p_{i,j'}$ for   $j' \in B_L(j)$ or for all $j' \in B_R(j)$.
This proves condition (c).
\end{proof}

\begin{proposition} \label{TabConstr}
If ${\mathcal M}=\left(M_{i,k} \, | k\in [N'] ,  i\in [d_m(k)] \right)$
is a $t$-conform  $(m \times N')$-multicolor tableau with $N'=2^{m'}-1$ for some $m'\leq m$. Then the following three constructions
yield new   $t$-conform tableaux  ${\mathcal M}_1,{\mathcal M}_2,{\mathcal M}_3$ :
\begin{enumerate}
\item ${\mathcal M}_1$ is the restriction of ${\mathcal M}$ to a block
$B(k)$;
\item  ${\mathcal M}_2$ results from  deleting  the top $m-m'$ rows
of  ${\mathcal M}$;
\item ${\mathcal M}_3$ results from  selecting  $2^{m^*}-1$ columns
for some $m^*<m'$
with respect to the following rules:
\begin{itemize}
\item For all even $k \in [2^{m^*}-1]$ choose column $k\cdot 2^{m'-m^*}$
of ${\mathcal M}$ as
column $k$ of ${\mathcal M}_3$.
\item For all odd $k \in [2^{m^*}-1]$ choose any column  $j$   of
${\mathcal M}$ with
$(k-1) \cdot 2^{m'-m^*} < j < (k+1) \cdot 2^{m'-m^*} $, delete from that
column all entries
of depth $d> m^*+m-m'$ and use this truncated column as column $k$ of
${\mathcal M}_3$.
\end{itemize}
\end{enumerate}
\end{proposition}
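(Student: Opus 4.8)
The plan is to verify the three defining properties of $t$-conformity (Definition~\ref{t-admis}) separately for each construction, re-indexing where necessary so that the surviving scheme is again a standard tableau of the appropriate smaller size. The cf-property and the monotonicity come for free in all three cases: the unique set $U_{i,k}$ depends only on the multiset stored in the cell $(i,k)$, and each inclusion chain $M_{i',k}\subseteq M_{i,k}$ involves only cells of a single column. Since every one of the three constructions acts on $\mathcal{M}$ purely by \emph{selecting} columns, \emph{deleting} a block of uppermost rows, or \emph{truncating} a column from below, and never alters a surviving multiset, properties~(1) and~(2) are inherited cell for cell. The whole burden therefore falls on re-establishing the relaxed left-right rule, property~(3).

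For $\mathcal{M}_1$ the guiding idea is that property~(3) is local to the block. Put $m''=\pi_2(k_0)+1$ and re-index the $2^{m''}-1$ columns of $B(k_0)$ by $j\mapsto j-k_0+2^{m''-1}$. Because $k_0$ is an odd multiple of $2^{m''-1}$, this translation changes each index by a multiple of $2^{m''}$, so every column of $B(k_0)$ keeps its factor-$2$ multiplicity $\pi_2$ and hence its depth $d_m$; thus $\mathcal{M}_1$ is again $d_m$-indexed. For any centre $k\in B(k_0)$ one has $B_{XY}(k)\subseteq B(k)\subseteq B(k_0)$ and, inside condition~(c), $B_Z(j)\subseteq B(j)\subseteq B(k_0)$, so every column and every depth that the instance of property~(3) for $k$ refers to survives the restriction unchanged. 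That instance is then literally inherited from $\mathcal{M}$.

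For $\mathcal{M}_2$ the key point is that in an $(m\times(2^{m'}-1))$-tableau every column has depth $d_m(k)\ge m-m'+1$, so the $m-m'$ deleted top rows sit strictly above every cell and every depth index (the deepest being $d_m(k)+2\ge m-m'+3$) that property~(3) ever mentions. Re-indexing the remaining rows by $i\mapsto i-(m-m')$ transforms $d_m$ into $d_{m'}$ cell for cell, since $d_m(k)-(m-m')=d_{m'}(k)$. Substituting this shift into conditions~(a)--(c) reproduces them verbatim over the depth function $d_{m'}$, so $\mathcal{M}_2$ is a $t$-conform standard $(m'\times(2^{m'}-1))$-tableau.

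The substantive case is $\mathcal{M}_3$, and I expect its block bookkeeping to be the main obstacle. Set $\tilde m:=m-m'+m^*$. I would first record the depth-matching identities: an even new index $k$ sits at original column $k\cdot 2^{m'-m^*}$, where $\pi_2(k\cdot 2^{m'-m^*})=\pi_2(k)+(m'-m^*)$ and hence $d_m(k\cdot 2^{m'-m^*})=\tilde m-\pi_2(k)=d_{\tilde m}(k)$, while each odd new index is realised by truncating a column of its window to depth exactly $\tilde m=d_{\tilde m}(k)$; thus $\mathcal{M}_3$ is automatically indexed as an $(\tilde m\times(2^{m^*}-1))$-tableau. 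Next I would verify the block correspondence: the chosen columns lying in a (quarter-)subblock of $\mathcal{M}_3$ about $k$ are exactly the chosen columns lying in the corresponding original (quarter-)subblock about $k\cdot 2^{m'-m^*}$, with the same $L/R$ orientation, this being forced by the uniform scale $2^{m'-m^*}$ of the even selection. With both facts in hand, property~(3) transfers: an instance for $\mathcal{M}_3$ at $k$ is obtained from the instance for $\mathcal{M}$ at $k\cdot 2^{m'-m^*}$ by restricting the universally quantified ranges $B_{XY}$ and $B_Z$ to their chosen sub-selections (subsets of ranges over which the predicate already holds), and by reading the depth-$(d_m(k)+2)$ and depth-$d_m(k)$ cells of conditions~(b) and~(c) through $d_{\tilde m}(k)=d_m(k\cdot 2^{m'-m^*})$ as the retained cells. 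Two points then need genuine care. First, the truncation of the odd columns to depth $\tilde m$ must not destroy a cell that condition~(b) inspects: this is exactly where the relaxation to $d_m(k)+2$ rather than $d_m(k)+1$ pays off, because property~(3) has content only when some $B_{XY}(k)\neq\emptyset$, which forces $\pi_2(k)\ge 2$ and hence $d_{\tilde m}(k)+2\le\tilde m$. Second, for an odd new index $j$ the half-blocks $B_L(j),B_R(j)$ are empty, so condition~(c) is vacuous there. Collecting these observations yields the $t$-conformity of $\mathcal{M}_3$.
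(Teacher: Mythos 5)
Your proof is correct and takes essentially the same route as the paper's, which merely asserts that after re-indexing the three tableaux are $t$-conform; you supply the depth- and block-preservation checks the paper leaves implicit, including the two genuinely delicate points for $\mathcal{M}_3$ (survival of the depth-$(d_m(k)+2)$ cell under truncation of odd columns, and vacuity of condition (c) at odd new indices). Incidentally, your width computation $2^{\pi_2(k_0)+1}-1$ for the restricted block is the correct one; the paper's stated $m^*=2\pi_2(k)$ appears to be a typo.
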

\begin{proof}
Recall, the width of $B(k)$ is  $N^*=2^{m^*}-1$ where  
$m^*= 2\pi_2(k)$.
So the only thing that has to do  for ${\mathcal M}_1$ is  shifting  the column
numbering
from the interval $B(k)=[k-2^{\pi_2(k)}+1, k+2^{\pi_2(k)}-1]$ to
$[N^*]$. Then  ${\mathcal M}_1$ is t-conform.
\\
For the second construction it is sufficient to shift down the indices
of all undeleted
rows by $m-m'$. Then ${\mathcal M}_2$ is an $m'\times N'$ tableau. Note
that an old row index
$d_m(k)=m - \pi_2(k)$ becomes $d_{m'}(k)$. Having that in mind, it
is also trivial
that  ${\mathcal M}_2$ is $t$-conform.
\\
The construction of ${\mathcal M}_3$ already contains the renumbering of
indices.  Again, it
is not hard to conclude the $t$-conformity because the construction
inherits the relations
of being a column in the left (or right) subblock of another column.
\end{proof}

\begin{theorem}
\label{l-cf-lowerbound}
$\chi_{cf}^l (n) \in
\Omega\left(\frac{\log{\log{n}}}{\log{\log{\log{n}}}}\right)$.
\end{theorem}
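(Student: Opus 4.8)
The plan is to transfer the purely combinatorial lower bound argument already carried out for r-visibility (Theorem \ref{r-cf-lowerbound}) over to the l-visibility setting, using the stretched spike polygons $S_m^{\downarrow}$ as the geometric witnesses and $t$-conformity as the purely combinatorial interface. The key point established in Proposition \ref{l-visAdmiss} is that any conflict-free l-guarding of $S_m^{\downarrow}$ with $t$ colors gives rise to a $t$-conform $(m\times N)$-multicolor tableau with $N=2^m-1$. So it suffices to prove a combinatorial statement about tableaux: there is a growth function $\tilde m(t)$ such that no $t$-conform tableau in standard form with $m=\tilde m(t)$ rows can exist. Combined with the bound $|S_m^{\downarrow}| = O(2^m)$ on the vertex count, the same arithmetic as in Theorem \ref{r-cf-lowerbound} then yields $t\in\Omega(\log\log n/\log\log\log n)$.

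First I would define the recursion for $\tilde m(t)$. Because the left-right rule is now relaxed to a quarter-subblock (the $\exists_{XY\in\{LL,LR,RL,RR\}}$ clause) rather than a half-block, and because condition (b) forbids color $c$ only at depth $d_m(k)+2$ rather than $d_m(k)+1$, each cutting stage descends through two combinatorial levels and splits into four candidate subblocks instead of two. I would therefore set $\tilde m(1)$ to a small constant sufficient for the base case and $\tilde m(t) = 2 + c\cdot t\cdot\tilde m(t-1)$ for a suitable absolute constant $c$; this only inflates the factorial-type growth by a constant factor and leaves the final asymptotics unchanged.

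The main body is the induction on $t$, run by contradiction exactly as in the proof of Theorem \ref{r-cf-lowerbound}. Assuming a $t$-conform standard tableau exists while no $(t-1)$-conform tableau of the relevant size does, I would execute at most $t$ cutting stages. Each stage starts, by the accumulated precondition, with a block $B(k_s)$ in which a set $C_{s-1}$ of $s-1$ colors is excluded from the top-row unique sets; I pick a fresh unique color $c_s\in U_{1,k_s}$, apply property (3) of $t$-conformity to obtain a quarter-subblock $B_{XY}(k_s)$ in which $Q(c_s,k_s,j)$ holds for all $j$, and split on whether $c_s$ is unique in the top cells of the relevant depth-$(1+sm')$ columns inside that quarter. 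In the ``not everywhere unique'' case, condition (c) propagates $c_s\notin U_{1,j}$ across an entire subblock, producing the precondition of the next stage with $C_s=C_{s-1}\cup\{c_s\}$; in the ``everywhere unique'' case I extract a sub-tableau isomorphic to a standard $(t-1)$-conform tableau via the three operations of Proposition \ref{TabConstr} (restriction to a block, deletion of the top rows, and the column-selection rule), contradicting the induction hypothesis. Because $C_t=[t]$ would force the empty set $U_{1,k_{t+1}}=\emptyset$, violating property (1), the process must terminate in a contradiction within $t$ stages.

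The hard part will be two bookkeeping issues peculiar to the l-visibility relaxation. The first is that the subblock chosen by the $\exists_{XY}$ clause is no longer canonically the ``opposite half,'' so the case-(2) propagation must route through condition (c) carefully to guarantee that the excluded color set $C_s$ really is maintained across \emph{all} columns of the next block, not just a quarter; I would verify that condition (c)'s own $\exists_Z\forall_{j'\in B_Z(j)}$ quantifier composes correctly with the stage recursion so that the precondition is genuinely re-established on a full block $B(k_{s+1})$. The second is confirming that the dimensions match: after deleting the top $sm'$ rows and restricting to the chosen quarter-block, Proposition \ref{TabConstr} must deliver a tableau in exactly the standard form for parameter $m'=\tilde m(t-1)$, which is what makes condition (b)'s depth offset of $+2$ (rather than $+1$) just barely sufficient to keep the cut cells free of $c_s$. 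These are precisely the steps where the stretching of $S_m^{\downarrow}$ and the relaxed left-right rule were engineered to cooperate, and checking that they dovetail is the crux of the argument.
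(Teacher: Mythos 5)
Your proposal follows essentially the same route as the paper: reduce to the purely combinatorial claim via Proposition \ref{l-visAdmiss}, then run the $t$-stage induction on $t$-conform tableaux using the quarter-subblock relaxation, the $+2$ depth offset in condition (b), and the three tableau operations of Proposition \ref{TabConstr}, with a factorial-type recursion for $\tilde m(t)$ whose constants differ only immaterially from the paper's $m(t)=1+t(m(t-1)+1)$. The two ``bookkeeping issues'' you flag are exactly the points the paper handles (via the $\forall\exists$/$\exists\forall$ case split over whole subblocks and the width computation $2^{m-s(m'+1)}-1$), so the outline is correct and matches the paper's argument.
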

\begin{proof}
Despite similarities to the proof of Theorem
\ref{r-cf-lowerbound}
some essential modifications have to be implemented.
The function $m(t)$  is now
defined by $m(1)=3$ and
$m(t)=1+t \cdot (m(t-1)+1)$ for
 $t \geq 2$.

\noindent
{\em Claim:} An $m(t) \times (2^{m(t)}-1)$-tableau cannot be $t$-conform.

\noindent
The inequality $m(t) \leq
(t+1)!$ is no longer valid in general,
but, it still holds for all $t \geq 5$. In fact, $m(5)=651<720=(5+1)!$
and the  induction
step works for any $t \geq 6$ as follows:
\\
$m(t) = t\cdot(m(t-1)+1)+1 \leq t(t!+1)+1 = t \cdot t! +(t+1) \leq t \cdot t!
+t!=(t+1)!$
\\
Hence using Proposition \ref{l-visAdmiss} one can then deduce the theorem
from the claim like before.

\noindent
In the  proof of the claim  by induction on $t$  the induction base for
$t=1$ works
with similar arguments as before. Any $1$-conform $3 \times 7$ tableau
requires to set $U_{i,k}=\{1\}$ for all $k \in [7]$ and all $i \in
[d_3(k)]$.
However, applying property 3 to the situation $1 \in U_{1,4}$ 
yields a contradiction with condition (c).

The induction step is proved by contradiction again. Assume 
that  there are no $(t-1)$-conform
$m'\times N'$-tableaux with $m'=m(t-1)$ and $N'=2^{m'}-1$, but there  is
a $t$-conform
$m\times N$-tableau $\mathcal M$ for $m=m(t)$ and $N=2^{m}-1$.
The proof consists of $s \leq t$ stages. The precondition of stage $s$
is the existence of a $t$-conform $m \times N_{s-1}$-tableau
where $N_{s-1}=2^{m-(s-1)(m'+1)}-1$ and the additional property that there is a
set $C_{s-1} \subset [t]$
consisting of $s-1$ colors, such that for all $c \in C_{s-1}$ and for all
$k \in [N_s]$ holds $c \not \in U_{1,k}$. The precondition for the first
stage is given by  
$\mathcal M$ with $N_0=N$ and $C_0= \emptyset$, but $\mathcal M$ will
change after every stage.
The postcondition  of the $s$-th stage
is either a contradiction obtained by constructing a   $(t-1)$-conform
$m'\times N'$-tableau (the stop condition, case 1) or the creation of
the precondition for the next step
(case 2). Note that if the stop condition did not occur after the $t$-th
stage, then
the new precondition gives also a contradiction because
$C_t=[t]$ and $N_t=2^1-1=1$, i.e.,~it would result in  a $t$-conform
$m \times 1$-tableau (a single column) such that no color can be  unique
in $M_{1,1}$.
\\
Now suppose that an  $m \times N_{s-1}$-tableau $\mathcal M$ with a
color set  $C_{s-1}$ fulfills   the
precondition for stage $s$  with $ 1\leq s \leq t$.
Let $k=\frac{N_{s-1}+1}{2}$ be the central column of ${\mathcal M}$ and
$c_s \in U_{1,k}$.
Note that the precondition implies $c_s \not \in  C_{s-1}$. Then by 
property 3 of $t$-conform
tableaux there is some $XY \in \{LL,LR,RL,RR\}$ such that predicate
$Q(c_s,k,j)$ is true
for all $j \in B_{XY}(k)$. Again we subdivide the block $B_{XY}(k)$
into $K=2^{m'-1}$ subblocks of equal width. These subblocks can be
defined by their central columns
$j_l$ where $l \in [K]$. Note that their width  just fits to the
precondition of the next stage because $B_{XY}(k)$ has width
$\frac{N_{s-1}+1}{4}-1$ and
consequently all  $B(j_l)$  have width:
  \[ \frac{N_{s-1}+1}{4 \cdot 2^{m'-1} }-1 =  \frac{2^{m-(s-1)(m'+1)}}{4
\cdot 2^{m'-1} }-1
= \frac{2^{m-(s-1)(m'+1)}}{2^{m'+1} }-1= 2^{m-s(m'+1)}-1 \]
Due to the weaker conditions encoded in predicate $Q(c,k,j)$ we have to modify the case inspection:\\
(1) $\forall_{l \in [K]} \ \exists_{ j' \in B(j_l)} \ c_s \in U_{1,j'}$\\
(2) $\exists_{ l \in [K]} \ \forall_{ j' \in B(j_l)} \ c_s \not\in U_{1,j'}$\\
In Case 1 we can immediately derive a contradiction  using the
constructions of Proposition
\ref{TabConstr}: First we restrict (the current) $\mathcal M$
to the block $B_{XY}(k)$, then we use the column selection with $m^*=m'$
where
the even columns (numbered $2l$ for $l \in [K]$) of the new tableau are
the ones that separate the subblocks
$B_{j_l}$ and $B_{j_{l+1}}$ from each other and the odd columns $2l-1$
are chosen from $B_{j_l}$
with respect to the property $c_s \in U_{1,j'}$. Supposing that  $c_s$
is not unique in the top set of
an even column would contradict condition (c) of
predicate $Q(c_s,k,j)$.
Thus $c_s$ is unique everywhere in the first row of the new tableau and
with respect to condition (b)
it does not occur at all in third row or deeper. Each column of this
new  tableau ${\mathcal M}'$
has depth $d \geq 3$ because all columns of ${\mathcal M}'$ had been
selected from a quarter
subblock $B_{XY}(k)$. Now we apply construction 2 (deletion of top rows)
to ${\mathcal M}'$ to obtain an
$m' \times N'$-tableau ${\mathcal M}^*$.  This way at least the two top
rows of ${\mathcal M}'$
are deleted and thus color $c_s$ doesn't occur anymore in ${\mathcal M}^*$.
Finally, we can  replace color $t$ by  color $c_s$ to obtain a
$(t-1)$-conform
$m' \times N'$-tableau.
\\
Case 2 is now the easier one because replacing $\mathcal M$ by a block
$B(j_l)$ such that
$\forall_{ j' \in B(j_l)} \ c_s \not\in U_{1,j'}$ (construction 1) yields
the precondition for the next
stage with $C_s=C_{s-1} \cup \{c_s\}$.
\end{proof}

\section{Conclusions}
\ \ \ \ We have shown almost tight bounds for the chromatic AGP for orthogonal 
simple polygons if based on r-visibility. While the upper bound proofs use known techniques, we consider the multicolor tableau method
 for the lower bounds to be the main technical contribution of our paper. This method seems to be unnecessarily complicated for the lower
 bound on $\chi_{cf}^r(n)$. But it shows its strength when applied to the line visibility case. It is this discrete structure which enables
 one to apply induction. Otherwise we would not know how to show a lower bound for a continuum of possible  guard positions with strange dependencies 
plus all possible colorings.\\
We conjecture that indeed  $\chi_{cf}^r(n)\in \Omega (\log\log n)$ using spike polygons and this should also yield a $\log\log n$ lower bound for the line visibility case via the stretched version.
But one cannot hope for more, $\log\log n$ is also an upper bound for cf-guarding of stretched spike polygons using line visibility. To improve this lower bound one has to look for other polygons. 


\bibliographystyle{splncs}

\begin{thebibliography}{999}


\bibitem {BS} A. B\"artschi and S. Suri. Conflict-free Chromatic Art Gallery Coverage. Algorithmica 68(1): 265--283, 2014.

\bibitem {B_etal} A. B\"artschi, S.K. Ghosh, M. Mihalak, T. Tschager, and P. Widmayer. 
Improved bounds for the conflict-free chromatic art gallery problem. In Proc. of 30th Symposium on Computational Geometry, pages 144--153, 2014.

\bibitem {Ch} V. Chv\'atal. A combinatorial theorem in plane geometry. Journal of Combinatorial Theory, Series B, 18(1):39--41, 1975.

\bibitem {ELV} L. H. Erickson und S. M. LaValle. An Art Gallery Approach to Ensuring that Landmarks are Distinguishable. In Proc. Robotics: Science and Systems VII, Los Angeles, pages 81--88, 2011.



\bibitem{Fisk} S. Fisk. A short proof of Chv\'atal's Watchman Theorem. Journal of Combinatorial Theory, Series B, 24(3):374-374, 1978.
\bibitem{FFH} S. P. Fekete, S. Friedrichs, and M. Hemmer. Complexity of the General Chromatic Art Gallery Problem. arXiv 1403.2972[cs.CG], 2014.
\bibitem {Hoff}F. Hoffmann, On the Rectilinear Art Gallery Problem, Proc. 17th ICALP, Springer LNCS 443,  717-728, 1990. 
\bibitem {MRS} R. Motwani, A. Raghunathan and H. Saran. Covering orthogonal polygons with star polygons: The perfect graph approach. Comput. Syst. Sci. 40 (1990) 19-48.
\bibitem {ORourke} J. O'Rourke. Art Gallery Theorems and Algorithms. Oxford University Press, New York, NY, 1987.
\bibitem {PaTo} J. Pach G. Tardos. Coloring axis-parallel rectangles. J. Comb. Theory Ser. A, 117(6):776--782, Aug 2010.
\bibitem {Sh} T. Shermer. Recent results in art galleries (geometry). Proceedings of the IEEE, 80(9): 1383--1399, September 1992.
\bibitem {Smo} S. Smorodinski. Conflict-Free Coloring and its Applications. In Geometry - Intuitive, Discrete, and Convex, volume 24 of Bolyai Society Mathematical Studies, Springer Verlag, Berlin 2014.
\bibitem {Ur} J. Urrutia. Art gallery and illumination problems, in J.-R. Sack and J. Urrutia, editors, Handbook on Computational Geometry, pages 973--1026, Elsevier Sc. Publishers, 2000.
\bibitem{Wi} M. Willert. Schranken für eine orthogonale Variante des chromatischen Art Gallery Problems, Bachelor Thesis, FU Berlin, November 2014.
\bibitem{Wor} C. Worman, M. Keil. Polygon Decomposition and the Orthogonal Art Gallery Problem, Int. J. Comput. Geometry Appl. 17(2):105--138, 2007.
\end{thebibliography}

\end{document}